\let\csname equation*\endcsname\relax
\let\csname endequation*\endcsname\relax
\newtheorem{theorem}{Theorem}
\newtheorem{definition}{Definition}
\newtheorem{proposition}{Proposition}
\newtheorem{corollary}{Corollary}
\newcommand{\Pauli}{\mathcal{P}}
\newcommand{\U}{\mathcal{U}}
\newcommand{\Q}{\mathcal{Q}}
\newcommand{\Pcal}{\mathcal{P}}
\newcommand{\supp}{{\rm supp}}
\newcommand{\trr}{\triangleright}
\newcommand{\K}{\mathcal{K}}
\newcommand{\E}{\mathbb{E}}
\newcommand{\Var}{{\rm Var}}
\newcommand{\ii}{{\rm i}}
\newcommand{\Cl}{{\rm Cl}}
\newcommand{\C}{\mathbb{C}}
\newcommand{\M}{\mathcal{M}}
\newcommand{\Stab}{{\rm Stab}}
\newcommand{\CNOT}{{\rm CNOT}}
\newcommand{\vabs}[1]{\left\| #1 \right\|}
\newcommand{\inv}{{\rm inv}}
\newcommand{\symp}{{\rm symp}}
\newcommand{\pbra}[1]{\left( #1 \right)}
\newcommand{\sbra}[1]{\left[ #1 \right]}
\newcommand{\cbra}[1]{\left\{ #1 \right\}}
 \newcommand{\ket}[1]{| #1 \rangle}
 \newcommand{\bra}[1]{\langle #1 |}
 \newcommand{\trace}{{\rm tr}}
 \newcommand{\abs}[1]{|#1|}
\newcommand{\rev}[1]{{\color{black} #1}}
\begin{document}

\title{Efficient measurement schemes for bosonic systems}

\author{Tianren Gu$^{1,2}$, Xiao Yuan$^{1,3}$ and Bujiao Wu$^{1,3}$
}
\address{$^1$Center on Frontiers of Computing Studies, Peking University, Beijing 100871, China}
\address{$^2$D-ITET, ETH Zürich, 8092 Zürich, Switzerland}
\address{$^3$School of Computer Science, Peking University, Beijing 100871, China}
\ead{bujiaowu@gmail.com}

\date{\today}

\begin{abstract}

Boson is one of the most basic types of particles  and preserves the commutation relation. 
An efficient way to measure a bosonic system is important not only for  simulating complex physics phenomena of bosons (such as nuclei) on a qubit based quantum computer, but for extracting classical information from a quantum simulator/computer that itself is built with bosons (such as a continuous variable quantum computer).  
Extending the recently proposed measurement schemes for qubits, such as shadow tomography and other local measurement schemes, here we study efficient measurement approaches for bosonic systems.
We consider truncated qudit and continuous variable systems, corresponding to simulated bosons on a discrete quantum computer and an inherent boson system, respectively, and propose different measurement schemes with theoretical analyses of the variances for these two cases. 
We numerically test the schemes for measuring nuclei vibrations simulated using a discrete quantum computer and a continuous variable Gaussian state, and the simulation results show great improvement of the performance of the proposed method compared to conventional ones. 
    
\end{abstract}

\noindent{\it Keywords\/}: bosonic system, measurement, qudit, continuous variables

\maketitle

\section{Introduction}

As one of the most basic types of indistinguishable particles in nature, boson preserves the commutation relation and describes a large set of basic particles, including photons, the W and Z bosons, gluons, Higgs bosons, etc, and composite particles, such as mesons and stable nuclei~\cite{braunstein2005quantum,weedbrook2012gaussian,pfister2019continuous}. 
An efficient way to simulate bosonic systems could thus enable us to study different quantum physics properties involving different types of forces. 
However, in contrast to fermions, bosons could stay in the same mode and thus the dimension could in principle go to infinity. 
When the excitation is low, such as the nuclei, we can truncate the dimension and effectively simulate bosons using qudits, which is implementable with a qudit-based quantum computer~\cite{blok2021quantum,chi2022programmable}. 
On the other hand, the excitation could also be high, such as continuous variable optics, we need to change to the continuous variable language~\cite{fukui2022building}. 
We can simulate it using a continuous variable device, which notably could also realize universal quantum computing~\cite{braunstein2005quantum,weedbrook2012gaussian}. 

A basic task in quantum information and computation is measuring the quantum state to extract classical outcomes.  
For conventional multi-qubit states, a possible measurement approach is to utilize quantum state tomography (QST)~\cite{leonhardt1997measuring,cramer2010efficient,gross2010quantum,lvovsky2009continuous,thew2002qudit} to reconstruct the quantum state, which however requires exponential resources. Interestingly, focusing on linear properties of an unknown quantum state, Aaronson et al.~\cite{aaronson2019shadow} proposed the ``shadow tomography'' algorithm, which only requires a polynomial number of copies of the quantum states. Nevertheless, an exponential size of quantum circuits and exponential classical post-processing cost are still needed.
Furthermore, Huang et al.~\cite{huang2020predicting} proposed the classical shadow (CS) tomography method, aiming to more efficiently approximate linear observables of the quantum state.  
The CS algorithm performs a quantum circuit $U$ uniformly randomly picked from the (local) Clifford group on the prepared state and then measures the ultimate state on the computational basis with measured basis $\{\ket{b}\}$. 
Then, we can calculate the quantum channel describing the overall process with the design properties of the Clifford group. 
The CS algorithm is efficient since the variance can be bounded to be logarithmic to the number of observables $M$ and square to the ``shadow norm'' of the observables by the 3-design property of the Clifford group. 
However, these methods do not exploit any prior information of the observable, which could further be leveraged to improve the measurement efficiency.
For example, consider an observable  that has a decomposition $H = \sum_{j}\alpha_j Q^{(j)}$ with real numbers $\alpha_j$ and Pauli strings $Q_i$, 
we can further consider the commutation relation of $\{Q_i\}$ as well as the weights $\abs{\alpha_j}$ associated with the Pauli strings, corresponding to variants of the CS method~\cite{wu2002qubits,hadfield2021adaptive,hadfield2022measurements,huang2021efficient,miller2022hardware,ippoliti2023classical,Zhao21Fermionic,low2022classical,wan2022matchgate,bertoni2022shallow} with even better performances.

Now considering a bosonic system that is either described by a truncated qudit or a continuous variable system, we study efficient measurement schemes for bosonic systems. 
Considering qudit systems, since the Clifford group only forms a 2-design~\cite{webb2015clifford} instead of a 3-design for $d>2$, the CS method has no theoretical guarantee in the qudit system. It can be proved that the estimation is still unbiased, while the variance bound cannot be straightforwardly generalized from qubit to qudit systems since the proof requires the 3-design property.
Numerical results show that the estimation error with the CS method on qudit systems is relatively large.
Here we propose a local measurement method for qudit systems by measuring the General Gell-Mann Basis (GGB) with an optimized probability. We prove that its variance can be bounded to $d^{3k}\vabs{O}_{\infty}^2$ in qudit systems, where $k$ is the locality of the observable $O$. {Our numerical results also show great improvement compared to a direct generalization of the CS method.}

We also propose a local measurement method for continuous variable descriptions of bosonic systems. We prove that the variance of the estimation of $\trace\pbra{\rho O}$ for any $k$-local boson operator $O=\sum_j \alpha_j Q^{(j)}$ can be bounded by $3^k \sum_p \abs{\alpha_p}^2B^{2{\rm deg}(Q^{(j)})}$, where $\deg(Q^{(j)})$ is the degree of the decomposed monomial $Q^{(j)}$ using the continuous variables and $B$ is an upper bound of measurement result of $x$ and $p$ under properly chosen coordinate, $Q^{(j)}$ has tensor product form $ x_1^{l_1}p_1^{m_1} \otimes x_2^{l_2}p_2^{m_2} \otimes\cdots\otimes x_n^{l_n}p_n^{m_n}$, and $x,p$ denote position and momentum quadratures respectively.
We benchmark this method for continuous variable optics.

In the following, we  first review bosonic systems in continuous and discrete representations, the classical shadow method, and its variants in Sec.~\ref{sec:pre}. We then propose a local measurement method and analyze the CS and the local measurement method on qudit systems in Sec.~\ref{sec:BosonDiscrete}.
In Sec.~\ref{sec:BosonContinuous}, we propose a local measurement method for continuous variable systems. We then give numerical experiments in Sec.~\ref{sec:numerical}  and a discussion in Sec.~\ref{sec:discussion}.

\section{Preliminary}
\label{sec:pre}

In this section, we provide the basic concepts and notations about bosonic systems with continuous and discrete representations. We also review the CS algorithm~\cite{huang2020predicting}, and the unified representation of local measurements schemes~\cite{wu2021overlapped} for qubit systems.

\subsection{Bosonic systems and their representations}
For a system of $n$ bosonic modes, we usually describe it with corresponding creation and annihilation operators $b_1^\dagger,b_2^\dagger,\cdots,b_n^\dagger$ and $b_1,b_2,\cdots,b_n$. Since bosons could stay in the same mode, the dimension of a bosonic state could be infinity, making its representation hard~\cite{friedberg1989gap,weedbrook2012gaussian}. 
There are two approaches to overcome this issue by using (1) dimension truncation and (2) continuous variables. 

Dimension truncation is a simple yet useful method to represent bosonic systems~\cite{thew2002qudit,chi2022programmable,blok2021quantum}. 
In many practical models, we only have low excitations and hence a smaller number of occupations in the same mode.
For such cases, we can neglect the highly excited states with more than a fixed number of $d$ bosons. Therefore the $n$ bosonic mode system becomes a corresponding $n$ qudit system. Similar to qubit systems, we can define Pauli operators and the Pauli group for qudit systems, as shown in Definition~\ref{def:qudit}.

\begin{definition}
\label{def:qudit}
For the orthonormal computational basis $\{\ket{s}|s \in\cbra{ 0,1,\cdots,d-1}\}$ on $d$-dimensional Hilbert space, the Pauli operators $X_d$ and $Z_d$ are defined as
\begin{equation}
    X_d\ket{s} = \ket{s\oplus 1}, \quad Z_d\ket{s} = w^s \ket{s},
\end{equation}
where $\omega= e^{2\pi\ii s /d}$. The corresponding single-qudit Pauli group $\Pauli_d$ is defined as $\Pauli_d = \langle \omega I_d, X_d, Z_d\rangle$, and the $n$-qudit Pauli group is defined as the tensor products of single-qudit $d$-dimensional Pauli group
\begin{equation}
    \Pauli_d^n = \Pauli_d^{\otimes n}.
\end{equation}
\end{definition}

These qudit Pauli operators are generally not Hermitian. Hence they can not be directly utilized as local observables as in the qubit scenario. Yet, the particle number operator $N = b^\dagger b$ is compatible with the operator $Z_d$, enabling us to estimate $Z_d$ via the measurement of $N$. The measurement of other qudit Pauli operators can be converted to the measurement of $Z_d$ through unitary transformations.

Another way is to use continuous variables to represent bosons~\cite{braunstein2005quantum,pfister2019continuous}.
Usually, a bosonic mode can be modeled as an oscillator with the position $x$ and momentum $p$ quadratures
\begin{equation}
    x = b + b^\dagger,\ p = -i(b-b^\dagger),
    \label{eq:cv_transformation}
\end{equation}
Note that in contrast to the Majorana operators in fermionic systems~\cite{zhao2021fermionic},  they neither commute nor anti-commute. The $x$ and $p$ operators describe a virtual oscillator with the Hamiltonian
\begin{equation}
    H = \frac{1}{2}p^2 + \frac{1}{2}x^2.
    \label{eq:boson_single_hamiltonian}
\end{equation}
The quantum state with $n$ excitations (bosons) of the mode corresponds to the $n$th excited eigenstate of the Hamiltonian $H$.
A general bosonic system could involve multiple modes.
In practice, we are interested to measure observables that are written in the form of creation and annihilation operators, which can always be compiled to the $x$ and $p$ operators. 

\subsection{Classical shadows}
Next, we briefly review the classical shadow algorithm introduced by Huang et al.~\cite{huang2020predicting}. Classical shadows are classical estimators of a general quantum state $\rho$ which can simultaneously predict $M$ linear functions $\trace(O_1\rho),\cdots,\trace(O_M\rho)$ with few measurements of the quantum state\rev{, where observables $O_1, \cdots, O_M$ are Hermitian}. Obtaining a classical shadow requires a copy of the quantum state $\rho$ and a `twirling' operation $U$ that maps
\begin{equation}
    \rho \mapsto U\rho U^\dagger,
\end{equation}
where $U$ is randomly chosen from some unitary ensemble $\U$. 
The algorithm measures the rotated state $U\rho U^\dagger$ under a set of computation basis $\{\ket{z},z\in\{0,1\}^n\}$. After obtaining the measurement result $b$, 
an estimation of $\rho$ can then be represented as
\begin{equation}
    \hat\rho = \M^{-1}(U^\dagger\ket{b}\bra{b}U),
\end{equation}
which is called the classical shadow of the state $\rho$, where the shadow channel $\M$ is defined as
\begin{equation}
    \M(\rho) = \mathop{\E}_{U\in \U}\sum_{b\in\{0,1\}^n} \bra{b} U\rho U^\dagger \ket{b} U^\dagger \ket{b}\bra{b} U = \mathop{\E}_{U\in\U,b\in\{0,1\}^n} \sbra{U^\dagger \ket{b}\bra{b} U}.
    \label{definition_shadow_channel}
\end{equation}
It is proved that $\hat{\rho}$ is an unbiased estimator for $\rho$ and the error of the estimation to linear function $\trace(O_i\rho)$ can be bounded using the shadow norm
\begin{equation}
    ||O||_{\rm shadow}^2 := \max_{\sigma:state} \left(
    \E_{U\in \U,b\in\{0,1\}^n} \bra{b}U \M^{-1}(O) U^\dagger\ket{b}^2
    \right).
\end{equation}
In particular, the number of samples needed to estimate all $M$ linear functions up to an additive error $\epsilon$ is given by
\begin{equation}
     N_{\rm sample} = \mathcal{O}\left(\frac{\log M}{\epsilon^2}||O||_{\rm shadow}^2\right).
\end{equation}
For the qubit Clifford ensemble, the shadow norm for observable $O$ is bounded by $3\trace(O^2)$ while for the qubit Pauli ensemble, the shadow norm is 
\rev{$4^k||O||_\infty^2$, where $k$ denotes the locality of $O$, i.e.~the maximum number of non-identity Pauli operators in the Pauli string decomposition of $O$}.

\subsection{Local measurements in qubit system}
\label{section_local_measurements}
While the classical shadow method works for general observables, it also neglects the information of the observables. Here, we show an alternative more general framework proposed by Wu et al.~\cite{wu2021overlapped} that unites classical shadow and other measurement schemes, including importance sampling~\cite{wecker2015progress,mcclean2016theory}, grouping method~\cite{kandala2017hardware,verteletskyi2020measurement,vallury2020quantum,izmaylov2019unitary,zhao2020measurement,hempel2018quantum,o2016scalable,crawford2021efficient}, (local-biased) classical shadow (CS)~\cite{hadfield2021adaptive,huang2020predicting}, and overlapped grouping measurement (OGM) algorithm~\cite{wu2021overlapped}.
We only consider the Pauli measurements $P\in \Q := \cbra{I, X, Y, Z}^{\otimes n}$ in qubit systems.  
For a given observable $O$ with a decomposition $O = \sum_j \alpha_j Q^{(j)}$, where $Q^{(j)}\in \Q$ and a given quantum state $\rho$, we can estimate $\trace\pbra{\rho Q^{(j)}}$ by measuring with Pauli-strings in $\Q$.
Any local measurement $P\in\Q$ can be written as a tensor product $P = P_1\otimes P_2\otimes\cdots\otimes P_n$. Measuring the quantum state with $P$ (eigenstate of $P$) gives us an $n$-bit string $\mu(P)=\mu(P_1) \ldots\mu(P_n)\in \cbra{\pm 1}^n$.

It can be easily checked that we can estimate $\tr\pbra{\rho Q^{(j)}}$ with the measurement $P\in\Q$ if and only if $P$ covers all components of $Q^{(j)}$, i.e.~$P_i = Q^{(j)}_i$ for all $i\in\supp\pbra{Q^{(j)}}$, where $\supp(Q^{(j)})=\{i|Q^{(j)}_i \ne I\}$ denotes the index set of the non-identity components in $Q^{(j)}$, which we denote as $Q^{(j)}\trr P$.
With these notations, we define the estimation of $\tr\pbra{\rho Q^{(j)}}$ as
\begin{equation}
    \mu(P,Q^{(j)}) = \begin{cases}
    \prod_{i\in \supp(Q^{(j)})}\mu(P_i), &\text{ if } Q^{(j)}\triangleright P\\
    0, &\text{ otherwise.}
    \end{cases}
    \label{estimator_Qj}
\end{equation}
 An estimator of the expectation of Hamiltonian $O$ is thus given by
\begin{equation}
    \hat o = \sum_j \alpha_j \mu(P,Q^{(j)}).
    \label{estimator_O_single}
\end{equation}

For multiple measurements, we further consider the distribution of measurements $\K(P)$ to optimize the performance under a fixed number of measurements and rewrite estimator (\ref{estimator_O_single}) as

\begin{equation}
    \hat o = \sum_j \alpha_j f(P,Q^{(j)},\K)\mu(P,Q^{(j)}),
    \label{estimator_O}
\end{equation}
\rev{
where $\K$ is the distribution of measurements $P$, $f(P,Q^{(j)},\K)$ is a function satisfying $\E_{P\sim \K,Q^{(j)}\trr P}[f(P,Q^{(j)},\K)] = 1$ for every $Q^{(j)}$. Since $\mu(P,Q^{(j)})$ is an unbiased estimator of $\trace(\rho Q^{(j)})$, the estimator Eq.~\eqref{estimator_O} satisfies
\begin{equation}
\begin{split}
    \E[v] = \E_{P\sim\K}\E_{\mu(P)}[v] &= \sum_j \alpha_j \E_{P\sim\K}[f(P,Q^{(j)},\K)\E_{\mu(P|Q^{(j)})}\mu(P,Q^{(j)})]\\ 
    &=\sum_j \alpha_j \E_{P\sim\K, Q\trr P}[f(P,Q^{(j)},\K)]\trace(\rho Q^{(j)}) \\
    &= \sum_j \alpha_j \trace(\rho Q^{(j)})\\
    &= \trace(\rho O).
\end{split}
\end{equation}
The design of the function $f(P,Q^{(j)},\K)$ influences the variance of the estimator, and hence determines the error of the estimators, as shown in Ref. \cite{wu2021overlapped}.
The variance of the estimator defined in Eq.~\eqref{estimator_O} is given by
}
\begin{equation}
    \Var[\hat o] = \sum_{j,k}\left[\alpha_j\alpha_k g(Q^{(j)},Q^{(k)})\trace(\rho Q^{(j)}Q^{(k)})\right] - [\trace(O\rho)]^2,
    \label{variance_O}
\end{equation}
\rev{
where $g(Q^{(j)},Q^{(k)}) = \E_{P\sim\K,Q^{(j)}\trr P,Q^{(k)}\trr P} [f(P,Q^{(j)},\K)f(P,Q^{(k)},\K)]$ uniquely depends on the expected value over the distribution $\K$ and the factor function $f$.  Once the specific form of $f(P, Q^{(j)}, \K)$ is established, our goal is to optimize the distribution $\K(P)$ since the variance described in Eq. \eqref{variance_O} is merely a linear combination of $g(Q^{(j)}, Q^{(k)})$. By following this approach, it is possible to replicate most existing local measurement optimization algorithms. For instance, in qubit systems, when $\K(P) = 1/3^n$ for all possible measurements $P$ and $f_{\rm CS}(P,Q^{(j)},\K) = \prod_i \left(\delta_{Q_i^{(j)},I} + 3\delta_{Q_i^{(j)},P_i}\right)$, the algorithm corresponds to CS. Similarly, when $\K(P^{(j)}) = |\alpha_j|/||\bm{\alpha}||_1$ for $P^{(j)} = Q^{(j)}$ and $f_{l_1}(P,Q^{(j)},\K) = \K(P)^{-1}\delta_{P,Q^{(j)}}$, the algorithm becomes importance sampling.
}

\section{Measuring bosonic systems (qudits)}
\label{sec:BosonDiscrete}
In this section we consider the case that the bosonic system is described by qudits. 
Here we give a local measurement method to qudit systems with measurements chosen from General Gell-Mann Basis in subsection \ref{subsec:localmeas_qudit}. We additionally prove that the variance of the estimation of  $\tr\pbra{\rho O}$ on truncated qudit systems can be bounded.  As a comparison, we also give the estimator with the Clifford group in qudit systems in subsection \ref{subsec:cs_qudit}.

\subsection{Qudit local measurements}
\label{subsec:localmeas_qudit}
To introduce the local measurement scheme to the qudits, we first introduce the General Gell-Mann Basis (GGB)~\cite{griffiths2020introduction,thew2002qudit,bertlmann2008bloch} as follows
\begin{equation}
\begin{aligned}
&\Lambda_s^{jk} = \ket{j}\bra{k} + \ket{k}\bra{j}\\
&\Lambda_a^{jk} = -i\ket{j}\bra{k} + i\ket{k}\bra{j}\\
&\Lambda^l = \sqrt{\frac{2}{l(l+1)}}\left(\sum_{j=0}^l \ket{j}\bra{j} -l\ket{l+1}\bra{l+1}\right).
\end{aligned}
\label{gellmann}
\end{equation}
\rev{A $d$-dimensional GGB basis consists of $d(d-1)/2$ symmetric and asymmetric matrices indexed by $j,k\in \cbra{0,1,\cdots,d-1}$, as well as $d$ diagonal matrices indexed by $l\in \cbra{0,1,\cdots,d-1}$. For simplicity, we reindex these matrices as $\{\Lambda_j\}$ with subscript $j \in\cbra{ 0,1,\cdots, d^2-1}$.}
These matrices are orthonormal since $\trace\pbra{\Lambda_j^\dagger\Lambda_k}=2\delta_{jk}$ and tomographically complete, enabling us to use them as the measurement basis. We then define the GGB string on multiple qudits as 
\begin{equation}
Q^{(j)} = \Lambda_{j_1}\otimes\Lambda_{j_2}\otimes\cdots\otimes\Lambda_{j_n},
\end{equation}
with which we can apply the local measurement scheme.
\rev{Similar to the Pauli strings, each $n$-mode Hermitian observable $O$ can be decomposed as a linear combination of $n$-term GGB strings, which is shown as follows,
\begin{equation}
    O = \sum_j \alpha_j Q^{(j)}, \quad j\in \{0,1,\cdots,d^2-1\}^{\otimes n},
    \label{eq:GGB_decompostion}
\end{equation}
for some real coefficients $\alpha_j$.
}

We can apply the local measurement methods as in qubit systems, as shown in Sec.~\ref{section_local_measurements} to the GGB of qudit systems. \rev{An observable $O$ is called $k$-local if and only if each GGB string in decomposition \eqref{eq:GGB_decompostion} has maximally $k$ non-identity terms. $||\cdot||_\infty$ denotes the spectral norm, we also refer to $k$ as the locality of $O$.}
With the properties of GGB, we can bound the variance of the estimations with the local measurement methods as follows. 

\begin{proposition}
Let $\hat o$ defined in Eq. \eqref{estimator_O} be the estimation of $\tr\pbra{\rho O}$ under the classical shadow scheme (uniform measurement probability) of GGB, then the variance of the estimator $\hat o$ can be bounded to
\rev{

\begin{equation}
    {\rm Var}(\hat o) \le d^{2k}(d-1)^k||O||_\infty^2,
\end{equation}
\label{pro:qudit_local}
where $k$ denotes the locality of the observable $O$.

}
\end{proposition}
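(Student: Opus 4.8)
The plan is to specialize the general variance identity~\eqref{variance_O} to the GGB classical-shadow scheme and then estimate the resulting double sum. First I would fix the sampling and the factor function: the uniform scheme assigns probability $\K(P)=(d^2-1)^{-n}$ to each setting (there being $d^2-1$ traceless GGB operators per mode), so the natural analogue of the qubit rule, obtained by replacing the constant $3$ with $d^2-1$, is $f_{\rm CS}(P,Q^{(j)},\K)=\prod_i\pbra{\delta_{Q^{(j)}_i,I}+(d^2-1)\,\delta_{Q^{(j)}_i,P_i}}$, which satisfies the required normalization $\E_{P\sim\K,\,Q^{(j)}\trr P}\sbra{f_{\rm CS}}=1$. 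A short computation then gives the diagonal coefficient $g(Q^{(j)},Q^{(j)})=(d^2-1)^{\abs{\supp(Q^{(j)})}}$. Using $\Var(\hat o)\le\E[\hat o^2]=\sum_{j,k}\alpha_j\alpha_k\,g(Q^{(j)},Q^{(k)})\,\trace(\rho Q^{(j)}Q^{(k)})$, I would split the double sum into its diagonal ($j=k$) and off-diagonal parts and treat them separately.

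The diagonal part $\sum_j \alpha_j^2\,(d^2-1)^{\abs{\supp(Q^{(j)})}}\,\trace(\rho (Q^{(j)})^2)$ already reaches the claimed bound, and it rests on two ingredients. First, a uniform spectral bound on the generators: the symmetric and antisymmetric ones satisfy $\vabs{\Lambda_s^{jk}}_\infty=\vabs{\Lambda_a^{jk}}_\infty=1$, while each diagonal generator has $\vabs{\Lambda^l}_\infty^2=\tfrac{2l}{l+1}$, and since $l\le d-1$ these are all bounded by $\vabs{\Lambda}_\infty^2\le\tfrac{2d}{d+1}$; as $(Q^{(j)})^2=\bigotimes_i\Lambda_{j_i}^2$, this gives $\trace(\rho(Q^{(j)})^2)\le\pbra{\tfrac{2d}{d+1}}^{\abs{\supp(Q^{(j)})}}$. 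Second, an orthonormality identity: from $\trace(\Lambda_a\Lambda_b)=2\delta_{ab}$ on traceless generators and $\trace(I^2)=d$ one gets $\trace(Q^{(j)}Q^{(k)})=\delta_{jk}\,2^{\abs{\supp(Q^{(j)})}}d^{\,n-\abs{\supp(Q^{(j)})}}$, hence $\trace(O^2)=\sum_j\alpha_j^2\,2^{\abs{\supp(Q^{(j)})}}d^{\,n-\abs{\supp(Q^{(j)})}}$, and combining with $\trace(O^2)\le d^{\,n}\vabs{O}_\infty^2$ yields $\sum_j\alpha_j^2\pbra{\tfrac{2}{d}}^{\abs{\supp(Q^{(j)})}}\le\vabs{O}_\infty^2$. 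Each per-string factor then collapses as $(d^2-1)\cdot\tfrac{2d}{d+1}\cdot\tfrac{d}{2}=d^2(d-1)$, uniformly at most $\pbra{d^2(d-1)}^{k}$, while the residual weight is exactly the quantity just bounded; this produces $d^{2k}(d-1)^k\vabs{O}_\infty^2$.

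The main obstacle is the off-diagonal ($j\neq k$) terms, which must be shown not to spoil the bound. The key structural fact—inherited from the qubit Pauli case—is that the single-qudit second moment $\E\sbra{\trace(\Lambda_a\hat\rho_i)\,\trace(\Lambda_b\hat\rho_i)}$ vanishes whenever $a\neq b$ are both non-identity (here $\hat\rho_i$ is the unbiased single-qudit estimator implicit in~\eqref{estimator_O}), so a cross term can survive only when $Q^{(j)}$ and $Q^{(k)}$ agree, modulo identity factors, on every mode. To control what remains I would restore the subtracted $[\trace(O\rho)]^2$ and pass to the operator form $\Var(\hat o)=\trace\big[(O\otimes O)\big(\bigotimes_i M_i-\bigotimes_i \rho_i^{\otimes 2}\big)\big]$ with $M_i=\E[\hat\rho_i\otimes\hat\rho_i]$, expand $\bigotimes_i M_i-\bigotimes_i\rho_i^{\otimes 2}=\sum_{T\neq\emptyset}\big(\bigotimes_{i\in T}\Delta_i\big)\otimes\big(\bigotimes_{i\notin T}\rho_i^{\otimes 2}\big)$ with $\Delta_i=M_i-\rho_i\otimes\rho_i$, and verify that the full-support term $T=\supp(O)$ reproduces the diagonal estimate above while the negative $-r_i\otimes r_i$ piece hidden in each $\Delta_i$ (with $r_i=\rho_i-I/d$), together with the subtracted $[\trace(O\rho)]^2$, keep the lower-support terms in check. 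Establishing that these corrections are nonpositive, or otherwise dominated by the slack already present in the two ingredients above, is the technical crux; this spectral control of $M_i$ is what replaces the qubit Clifford $3$-design argument, unavailable for $d>2$.
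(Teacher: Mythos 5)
Your setup (uniform $\K$, $g(Q^{(j)},Q^{(j)})=(d^2-1)^{\abs{\supp(Q^{(j)})}}$, the spectral bounds $\vabs{\Lambda}_\infty^2\le 2(d-1)/d\le 2d/(d+1)$, and the orthogonality identity giving $\sum_j\alpha_j^2(2/d)^{\abs{\supp(Q^{(j)})}}\le\vabs{O}_\infty^2$) is correct, and your diagonal estimate is in fact a cleaner version of the norm conversions the paper itself uses. But the proof is not complete: the off-diagonal contribution is left as an unproven "technical crux," and the strategy you propose for it cannot work as stated. Since your diagonal part already exhausts the entire budget $d^{2k}(d-1)^k\vabs{O}_\infty^2$, you would need the surviving cross terms (those with $j\neq k$ but $Q^{(j)},Q^{(k)}$ agreeing on $\supp(Q^{(j)})\cap\supp(Q^{(k)})$) minus $[\trace(O\rho)]^2$ to be nonpositive. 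That is false in general: already for $O=\alpha_1\,\Lambda_a\otimes I+\alpha_2\,\Lambda_a\otimes\Lambda_b$ with $\alpha_1\alpha_2>0$, the cross term $2\alpha_1\alpha_2(d^2-1)\trace(\rho\,\Lambda_a^2\otimes\Lambda_b)$ can be strictly positive while $\trace(\rho\,\Lambda_a\otimes I)=0$ kills the compensating piece of $[\trace(O\rho)]^2$. So "diagonal saturates the bound, cross terms are harmless" is not a viable division of labor, and the $\Delta_i$-decomposition you sketch does not supply the missing inequality.

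The paper avoids this split entirely. After dropping $-[\trace(O\rho)]^2$ it bounds $\Var(\hat o)\le\vabs{\sum_{p,q}\alpha_p\alpha_q\,g(Q_p,Q_q)\,Q_pQ_q}_\infty$ and then uses the identity $g(Q_p,Q_q)=(D-1)^{\abs{Q_p}+\abs{Q_q}-\abs{Q_p\cup Q_q}}$ to regroup the full double sum as $(D-1)^{-k}\sum_{Q_r\in\Q^{\otimes k}}\bigl(\sum_{Q_p\trr Q_r}\alpha_p(D-1)^{\abs{Q_p}}Q_p\bigr)^2$, a sum of perfect squares indexed by ``parent'' strings $Q_r$ of full weight $k$. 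Convexity of the spectral norm plus Cauchy--Schwarz (counting that each $Q_p$ is covered by $(D-1)^{k-\abs{Q_p}}$ parents and that $\sum_{Q_p\trr Q_r}(D-1)^{\abs{Q_p}}=D^k$) then yields $\Var(\hat o)\le D^k\sum_p\abs{\alpha_p}^2\vabs{Q_p}_\infty^2$ with $D=d^2$, from which your own final norm conversion gives $d^{2k}(d-1)^k\vabs{O}_\infty^2$. This regrouping is the ingredient your proposal is missing: it controls the diagonal and the compatible off-diagonal terms simultaneously, which is precisely what replaces the $3$-design argument unavailable for $d>2$. If you insert that step, the rest of your computation goes through.
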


To prove this bound, we need to calculate the function $g(Q_p, Q_q)$  given  in Eq.~\eqref{variance_O}. For the uniform measurement probability, the task can be simplified to count the number of measurements $P$ such that $Q_p \triangleright P$ or $Q_q\triangleright P$ or both. Setting $|Q|$ as the non-identity terms in GGB string $Q$, we can thus figure out $g(Q_p,Q_q) = (d^2-1)^s$ for $s = |Q_p\bigcap Q_q|=|Q_p|+|Q_q|-|Q_p\bigcup Q_q|$. Using this result together with the Cauchy-Schwarz inequality, we can prove Proposition~\ref{pro:qudit_local}. 
See \ref{appendix_variance} for the proof details, which is a generalized version of the variance bound of local measurement schemes under uniform measurement probability.

\subsection{Classical shadow for qudits}
\label{subsec:cs_qudit}
Here we also introduce the generalization of the classical shadow method to qudit systems. We show that the estimation from a uniformly sampled global Clifford group in qudit systems is still an unbiased estimation of $\tr\pbra{\rho O}$, while it is hard to bound the variance of the estimation.

To apply the classical shadow scheme to the qudit space, we first introduce the following qudit Clifford group.
\begin{definition}
The $d$-dimensional $n$-mode Clifford group is defined as the normalizer of the Pauli group $\Pcal_d^n$ on the qudit system:
\begin{equation}
    \Cl(d^n) = \{c\in U(d^n)| c\Pauli_d^n c^\dagger = \Pauli_d^n\},
\end{equation}
\end{definition}
When $d$ is a prime, the $d$-dimensional Clifford circuit can be efficiently simulated on a classical computer, which is an extension of the Gottesman-Knill Theorem (For details see \ref{dCliffordsimulation}). Moreover, it can be proved that the $d$-dimensional Clifford group forms a unitary 2-design~\cite{webb2015clifford}. Then we can apply the integration in \cite{gross2015partial}
\begin{equation}
 \E_{\Cl(d^n)} \bra{x}UAU^\dagger\ket{x} U^\dagger\ket{x}\bra{x}U = \frac{A + \trace(A)I}{d^n(d^n+1)},\quad \forall A \in \mathbb{H}(\C^{d^n}).
\end{equation}
Therefore, the shadow channel with the qudit Clifford group is
\begin{equation}
    \M_{\Cl(d^n)}(\rho) = \frac{\rho + I}{d^n+1},
\end{equation}
and the corresponding classical shadow is
\begin{equation}
    \hat{\rho} = (d^n+1)U^\dagger\ket{b}\bra{b}U - I_d,
\end{equation}
where $U\in \Cl(d^n)$ and $b\in \{0,1,\cdots,d^n-1\}$.\\

However, as the qudit Clifford group generally does not yield a 3-design like a qubit Clifford group, the variance of the prediction cannot  be directly estimated.

\section{Measuring bosonic systems (continuous variables)}
\label{sec:BosonContinuous}

Here we consider the continuous variable description of bosons. In particular, we assume that the target observable is a linear combination of products of the $x$ and $p$ operators. We show how to measure these general types of observables efficiently. 

\subsection{p-x strings}
We consider $\textbf{p}$ and $\textbf{x}$ as the local measurements. For a general one-mode boson observable $O = O(p,x)$, a natural basis is $x^lp^m$ with $l,m \ge 0$. Thus for $n$-mode bosonic systems, we can define the so-called p-x strings similar to Pauli strings. A p-x string can be generally written as the tensor product form $Q = x_1^{l_1}p_1^{m_1} \otimes x_2^{l_2}p_2^{m_2} \otimes\cdots\otimes x_n^{l_n}p_n^{m_n}$. We define the degree ${\rm deg}(Q)$ of a p-x string as the summation of its exponents such that ${\rm deg}(Q) = \sum_{j=1}^n (l_j+m_j)$. \rev{We denote by $\mathcal{Q}_{px}$ the set of all possible $n$-mode p-x strings, i.e.~$\mathcal{Q}_{px} = \{x_1^{l_1}p_1^{m_1} \otimes x_2^{l_2}p_2^{m_2} \otimes\cdots\otimes x_n^{l_n}p_n^{m_n}| l_i,m_i\in \mathbb{N}_+,i=1,\cdots,n\}$}. It can be proved that $\Q_{px}$ form a basis for any $n$-mode bosonic operators
\begin{equation}
O = O(\textbf{p},\textbf{x}) = \rev{\sum_{j=1}^M} \alpha_j Q^{(j)},\quad Q^{(j)} \in \Q_{px}
\label{O_px_decomposition}
\end{equation}
We note that operators in this basis are not orthogonal to each other. Commonly we may use a Gram-Schmidt process to obtain a series of orthonormal polynomials and then decompose the observables under this basis, yet in the actual case the observables we care about in many physical models are only functions of the position and the momentum, so a simple expansion would give us the form \eqref{O_px_decomposition}.

Generally, the possible measurement basis of the p-x strings in the expansion form of observable $O$ may still be infinite as $x$ and $p$ do not commute to each other. For such observables, we can apply an importance sampling and truncate the negligible terms. Here, among all p-x strings, we consider a special case where each term in the p-x string should be $x^l$ or $p^m$, i.e. $l_j\cdot m_j = 0$ for all $j\in\{1,\cdots,n\}$ so cross-terms such as $x_1p_1$ do not exist, called \emph{pure p-x strings}. Pure p-x strings correspond to physical systems whose Hamiltonian or other quantities are separable in each mode. In this case, each term of the p-x string either commutes with $x$ or commutes with $p$. We can therefore measure it with a simple p-x string with each term being $x$ or $p$, which gives us a limited number of possible measurements.

\subsection{Continuous variable local measurements}
\label{subsec_outmethod_main}
After constructing the p-x strings and decomposing the observables with the p-x string basis, we can use the measurement scheme in Section \ref{section_local_measurements}. The estimator (\ref{estimator_O}) and the general variance (\ref{variance_O}) remain unchanged when we use $\Q_{px}$ as the measurement basis.

Generally, we can hardly estimate the efficiency of p-x string measurements as the cardinality of $\Q_{px}$ is infinite. However, in the special case of pure p-x strings, we can bound the variance \eqref{variance_O} since the number of possible measurements is finite. 
\rev{We define the \emph{spectral projection} of the Hermitian operator $\hat A$ associated with the interval $[-B,B]$ as
\begin{equation}
1_B(\hat A) = \sum_{-B\le a\le B}\ket{a}\bra{a},
\end{equation}
where $\hat A$ has spectral decomposition $\hat A=\sum_a  a\ket{a}\bra{a}$ and $B>0$. Here the spectral projection replaces continuous spectral with the summation with integration.
To give an accurate description for the overall error, we introduce a set of positive constants $\cbra{B_i^{(j)}}_{i\in [n]}^{j\in [M]}$, which satisfy
\begin{equation}
    |\Tr\left[Q^{(j)}(\rho - P^{(j)}\rho P^{(j)})\right]| < \varepsilon_B,
    \label{eq:Bij_bound}
\end{equation}
for $j\in [M]$, where
\begin{equation}
P^{(j)} = \prod_{Q_i^{(j)}\ne I}1_{B_i^{(j)}}(Q_i^{(j)}),
\label{eq:Bij_projector}
\end{equation}
being projectors mapping the state to a finite range in position space or momentum space depending on the corresponding term in $Q^{(j)}$.
\begin{theorem}
Let $O$ be an $n$-mode Hermitian observable whose p-x string decomposition \eqref{O_px_decomposition} only contains pure p-x strings, and $\rho$ be an $n$-mode bosonic state satisfying $\abs{\tr(\rho Q^{(j)})} < \infty$ for $1\le j\le M$. The prediction error for the estimator defined in Eq.~\eqref{estimator_O} with {uniformly distributed}  measurement $\cbra{P^{(j)}}_{j=1}^M$ defined in Eq.~\eqref{eq:Bij_projector} can be bounded as $\varepsilon_o + \varepsilon_B \sum_{j=1}^M |\alpha_j|$ with probability at least $1-\delta$ with the number of measurements
\begin{align}
3^k\sum_j |\alpha_j|^2\prod_{i=1}^n(B_i^{(j)})^{2l_i^{(j)}}/(\varepsilon_o^2\delta).
\label{eq:thm_var_px}
\end{align}
\label{thm:var_px}
\end{theorem}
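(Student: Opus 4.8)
The plan is to split the total prediction error by a triangle inequality into a deterministic truncation bias and a statistical fluctuation, and to control each separately. For the bias I would examine the expectation of the single-shot estimator $\hat o$ from Eq.~\eqref{estimator_O}. Because the measurement $P^{(j)}$ confines every mode $i\in\supp(Q^{(j)})$ to the window $[-B_i^{(j)},B_i^{(j)}]$ through the spectral projection $1_{B_i^{(j)}}$, the estimator is exactly unbiased not for $\trace(\rho O)$ but for the truncated quantity $\sum_j \alpha_j\,\trace\pbra{P^{(j)}\rho P^{(j)}Q^{(j)}}$. Subtracting the two and applying the defining inequality~\eqref{eq:Bij_bound} term by term gives $\abs{\E[\hat o]-\trace(\rho O)}\le \sum_{j=1}^M \abs{\alpha_j}\,\varepsilon_B$, which is exactly the $\varepsilon_B\sum_j\abs{\alpha_j}$ appearing in the claimed error.

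The heart of the argument is the single-shot variance bound $\Var[\hat o]\le 3^k\sum_j\abs{\alpha_j}^2\prod_{i=1}^n (B_i^{(j)})^{2l_i^{(j)}}$. I would start from the general variance formula~\eqref{variance_O}. The genuinely new feature relative to the qubit analysis is that the outcomes of $x^{l}$ and $p^{m}$ are a priori unbounded, so there is no analogue of the $\abs{\mu}\le 1$ estimate that the Pauli case relies on; this is precisely the gap that the truncation windows close. On $\supp(Q^{(j)})$ each recorded outcome lies in $[-B_i^{(j)},B_i^{(j)}]$, so the raw estimate obeys $\abs{\mu(P,Q^{(j)})}\le \prod_i (B_i^{(j)})^{l_i^{(j)}}$ and its square is at most $\prod_i (B_i^{(j)})^{2l_i^{(j)}}$. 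For a pure p-x string every non-identity mode pins the admissible quadrature measurement to a single choice, so the covering combinatorics $Q^{(j)}\trr P$ contribute a bounded per-mode factor that raises to $3^{|\supp(Q^{(j)})|}\le 3^k$, mirroring the local Pauli classical shadow. Reducing the double sum in~\eqref{variance_O} to the diagonal $\sum_j\abs{\alpha_j}^2$ form --- either because the randomized quadrature assignment decorrelates distinct strings or via a Cauchy--Schwarz bookkeeping analogous to the proof of Proposition~\ref{pro:qudit_local} --- then yields the stated variance bound.

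With the single-shot variance in hand I would close the argument with Chebyshev's inequality. Averaging $N$ independent shots shrinks the variance to $\Var[\hat o]/N$, so $\Pr\!\left[\,\abs{\bar o_N-\E[\hat o]}\ge\varepsilon_o\,\right]\le \Var[\hat o]/(N\varepsilon_o^2)$; setting the right-hand side equal to $\delta$ forces $N=\Var[\hat o]/(\varepsilon_o^2\delta)$, which after substituting the variance bound is exactly the sample count~\eqref{eq:thm_var_px}. Combining $\abs{\bar o_N-\E[\hat o]}\le\varepsilon_o$ (holding with probability at least $1-\delta$) with the deterministic bias $\abs{\E[\hat o]-\trace(\rho O)}\le \varepsilon_B\sum_j\abs{\alpha_j}$ through the triangle inequality produces the advertised total error $\varepsilon_o+\varepsilon_B\sum_j\abs{\alpha_j}$.

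I expect the variance step to be the main obstacle, for two reasons. First, one must check that introducing the projector $P^{(j)}$ is consistent simultaneously for the mean and for the second moment, so that the bias truncation and the variance truncation do not interfere and the bounded-outcome estimate genuinely controls $\E[\mu^2]$ despite the unbounded spectrum of $x$ and $p$. Second, pinning down the $3^k$ prefactor requires making precise how a pure p-x string fixes the admissible quadrature assignments and verifying that the covering count reproduces the classical-shadow constant, and --- more delicately --- handling the off-diagonal $j\ne k$ contributions so that the clean $\sum_j\abs{\alpha_j}^2$ survives rather than the weaker $(\sum_j\abs{\alpha_j})^2$; here I would lean on the Cauchy--Schwarz bookkeeping already developed for Proposition~\ref{pro:qudit_local}.
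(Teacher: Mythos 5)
Your proposal is correct and follows essentially the same route as the paper: the bias is controlled by summing Eq.~\eqref{eq:Bij_bound} weighted by $|\alpha_j|$, the variance is obtained from the general local-measurement bound of \ref{appendix_variance} with $D=3$ and the projected spectral norm $\|P^{(j)}Q^{(j)}P^{(j)}\|_\infty=\prod_i(B_i^{(j)})^{l_i^{(j)}}$ (the same Cauchy--Schwarz bookkeeping you cite from Proposition~\ref{pro:qudit_local}), and Chebyshev converts the variance into the sample count~\eqref{eq:thm_var_px}. The subtleties you flag are exactly the ones the paper resolves by deferring to the appendix argument, so no gap remains.
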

\begin{proof}
Summing over $j$ in Eq.~\eqref{eq:Bij_bound} we get the bias bound $\varepsilon_B\sum_{j=1}^M|\alpha_j|$. Noted that by Eq.~\eqref{eq:Bij_projector} the projected spectral norm $||P^{(j)}Q^{(j)}P^{(j)}||_\infty = \prod_{i=1}^n(B_i^{(j)})^{l_i^{(j)}}$, referring to \ref{appendix_variance} the variance of estimator \eqref{estimator_O} equals to $3^k\sum_j |\alpha_j|^2\prod_{i=1}^n(B_i^{(j)})^{2l_i^{(j)}}$. Eq.~\eqref{eq:thm_var_px} then follows. 
\end{proof}
}

\rev{
In Theorem \ref{thm:var_px} the condition $\abs{\tr(\rho Q^{(j)})}<\infty$ assures that the wave function decreases even faster than the p-x string with the largest exponent in the decomposition \eqref{O_px_decomposition}. Therefore, as $B_i^{(j)}\to\infty$ for all $i$ and $j$, we have $\varepsilon\to 0$, indicating that an infinite projection range gives an unbiased estimator. 

The constants $\cbra{B_i^{(j)}}_{i\in [n]}^{j\in [M]}$ can be seen as the detection range of the device, i.e., we suppose all the measurement outcomes would be bounded by $B_i^{(j)}$ and discard the outliers. This method inevitably causes bias but reduces variance, hence there should be a trade-off. However, in any real physical system, the 
wave function of the quantum state $\rho$ is in the Schwartz space $\mathcal{S}(\mathbb{R}^{2n})$, i.e., the space of rapidly decreasing functions. For example, most bosonic systems have a Gaussian-like wave function, which decreases at a rate of $e^{-x^2}$. In such cases, the value of $\varepsilon$ diminishes rapidly as we progressively increase $\cbra{B_i^{(j)}}_{i\in [n]}^{j\in [M]}$. Consequently, we can safely neglect the bias and approximately regard the estimator \eqref{estimator_O} as unbiased.

Theoretically we can calculate the infimum of each $B_i^{(j)}$, which can be proved as the tight bound. However, since such calculation is usually hard, 
it is customary to employ a shared bound denoted as $B$, rather than utilizing a set of $\cbra{B_i^{(j)}}_{i\in [n]}^{j\in [M]}$. As a consequence, Theorem \ref{thm:var_px} can be transformed into the subsequent corollary.

\begin{corollary}
    For any $k$-local operator $O$ that its p-x string decomposition in Eq. \eqref{O_px_decomposition} which only contains pure p-x strings, the variance of the estimator as defined in Eq. \eqref{estimator_O} under the classical shadow scheme (uniform measurement probability) is bounded by
\begin{equation}
{\rm Var}[\hat o] < 3^kB^{2K}\sum_p|\alpha_p|^2,
\label{Var_o_px}
\end{equation}
where $K$ is the maximum degree of the decomposed p-x strings and $B$ is a positive constant denoting the projection range of both $x$ and $p$. 
\label{p-x theorem}
\end{corollary}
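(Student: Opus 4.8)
The plan is to obtain Corollary~\ref{p-x theorem} as a direct specialization of Theorem~\ref{thm:var_px}, collapsing the family of per-mode, per-string detection bounds $\cbra{B_i^{(j)}}$ into a single uniform constant $B$. From the proof of Theorem~\ref{thm:var_px}, the variance of the estimator~\eqref{estimator_O} under uniform measurement probability is $3^k\sum_{j}|\alpha_j|^2\prod_{i=1}^n(B_i^{(j)})^{2l_i^{(j)}}$, where $l_i^{(j)}$ denotes the exponent of the single quadrature ($x_i$ or $p_i$) that occupies mode $i$ in the pure p-x string $Q^{(j)}$. Because each string is pure, exactly one of $x_i,p_i$ appears on every active mode, so the exponents sum to the degree, $\sum_{i=1}^n l_i^{(j)}={\rm deg}(Q^{(j)})$; clarifying this notational point is essential, since this identity is what ties the product to $B^{2K}$.

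First I would replace every $B_i^{(j)}$ by the shared bound $B$, chosen as an upper bound on all individual detection ranges so that $B\ge B_i^{(j)}$. Monotonicity of the power function then gives, term by term, $\prod_{i=1}^n(B_i^{(j)})^{2l_i^{(j)}}\le B^{2\sum_i l_i^{(j)}}=B^{2{\rm deg}(Q^{(j)})}$. Next I would bound each string degree by the maximum $K=\max_j{\rm deg}(Q^{(j)})$: assuming $B\ge 1$, monotonicity in the exponent yields $B^{2{\rm deg}(Q^{(j)})}\le B^{2K}$ for every $j$. Substituting back and pulling the common factor $B^{2K}$ out of the sum gives ${\rm Var}[\hat o]\le 3^kB^{2K}\sum_j|\alpha_j|^2$, which is the asserted bound after relabelling the summation index.

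The step I would treat most carefully — and the only genuine subtlety — is the passage $B^{2{\rm deg}(Q^{(j)})}\le B^{2K}$, which requires $B\ge 1$; for $B<1$ the inequality reverses and one would have to retain the individual degrees rather than $K$. Since $B$ models the physical detection/projection range of the quadratures and is in practice much larger than $1$, this hypothesis is benign. The strict inequality in the statement can be recovered either from dropping the nonnegative subtracted term $[\trace(O\rho)]^2$ in the exact variance~\eqref{variance_O} whenever $\trace(O\rho)\ne 0$, or from strict monotonicity whenever some string has degree below $K$; neither requires additional estimates, so the remaining work is entirely routine.
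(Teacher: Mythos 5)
Your proposal is correct and follows essentially the same route as the paper, which likewise obtains the corollary by specializing Theorem~\ref{thm:var_px} (equivalently, the general bound $D^k\sum_p|\alpha_p|^2||Q_p||_\infty^2$ of \ref{appendix_variance} with $D=3$ and $||Q_p||_\infty\le\prod_i (B_i^{(j)})^{l_i^{(j)}}$) by replacing the individual detection ranges $B_i^{(j)}$ with a shared upper bound $B$ and each degree ${\rm deg}(Q^{(j)})$ with the maximum $K$. Your remark that the step $B^{2{\rm deg}(Q^{(j)})}\le B^{2K}$ requires $B\ge 1$ is a legitimate hypothesis that the paper leaves implicit.
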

Given bias parameter $\varepsilon$, the infimum of $B$ that Eq.~\eqref{Var_o_px} strictly holds for arbitrary observable $O$ and state $\rho$ is generally hard to compute. 
\rev{In practical scenarios, we carefully select an appropriate value for $B$ that serves as an upper bound for $\cbra{B_i^{(j)}}_{i\in [n]}^{j\in [M]}$}. Usually, we set $B$ as the range that contains a cutting probability of finding the particle.
To illustrate, when considering a Gaussian state characterized by a zero mean and standard deviation $\sigma$, it is advisable to consider setting the value of $B$ as $3\sigma$. Although this bound may not be optimal in all scenarios, it generally yields satisfactory performance.
}

To further improve the performance, we introduce the idea of overlapped grouping measurement~\cite{wu2021overlapped} to our continuous variable scheme, where we maximally group the $Q^{(j)}$s that are compatible (commute on each mode) and assign a corresponding measurement $P$ for each group. In this case the function $f_G(P,Q,\K) = \left(\sum_{P:Q\trr P}\K(P)\right)^{-1}\delta_{Q\trr P}$ for some given measurement distribution $\K(P)$. We can then optimize the variance by choosing the distribution $\K(P)$ to minimize the diagonal variance
\begin{equation}
    l(\K) = \sum_j \frac{\alpha_j^2}{\sum_{P:Q^{(j)}\triangleright P}\K(P)},
    \label{eq:cost_func}
\end{equation}
which is state invariant and can be evaluated much faster than the actual variance. 
We use Eq.~\eqref{eq:cost_func} as the main optimization method for the numerical experiments of continuous variable local measurements in the following sections. \\

In the special case that the observable $O$ can be separated to the simple summation of momentum functions $U(\bm{p}) = U(p_1,\cdots,p_n)$ and position functions $V(\bm{x}) = V(x_1,\cdots,x_n)$, the variance bound can be further improved to
\begin{equation}
    \Var(\hat o) \le (||U||_\infty + ||V||_\infty)^2.
\end{equation}
The proof is given in \ref{separableobservable}. Particularly, for bosons in a potential field, whose Hamiltonian can be generally written as

\begin{equation}
H = \frac{1}{2}\sum_{i=1}^{N} p_i^2 + V(\bm{x}),
\label{Ham_px_N}
\end{equation}
the variance can be bounded by
\begin{equation}
{\rm Var}(\hat o) \le (\frac{N}{2}B_p + ||V||_\infty)^2,
\label{bosoninpotential}
\end{equation}
where $B_p$ is the bound of momentum. 
\rev{
Assuming that $V(\bm{x})>0$, and the total energy of the bosonic system is finite, i.e. $\tr(\rho H) < E$ for some $E>0$, a reasonable suggestion is to choose $B_p = \sqrt{2E}$.
}

\subsection{Measuring with noise}

Because of the precision problem, obtaining the exact measurement results of continuous variables is impossible.
In the above discussion, we did not consider the measurement noise for continuous variables. Here we give the revised estimation and variance bound for continuous variables with measurement noise.

Suppose that the measurement results $\mu(P)$ for measurement $P$ are shifted by some random error $e(P)$, as in the following equation
\begin{equation}
\mu(P)\to \mu'(P) = \mu(P) + e(P).
\end{equation}
As we only consider the precision of the measurement device, we assume these errors are small, independent of the quantum state, and independent of each other. Their expectations should be 0, i.e. $\mathbb{E}e(P_i) = 0$ for each possible $P_i$. Then the estimator (\ref{estimator_O}) becomes 
\begin{equation}
\hat o = \sum_j \alpha_j f_G(P_G,Q^{(j)},\mathcal{K})\mu' (P,{\rm supp}(Q^{(j)})).
\label{v_G_noise}
\end{equation}
With this revised definition of the estimation, the associated variance is given in Proposition \ref{pro:Var_noise}.
\begin{proposition}
\label{pro:Var_noise}
The estimator with noise defined in (\ref{v_G_noise}) is unbiased, and the variance is equal to
\begin{equation}
\begin{aligned}
{\rm Var}(\hat o) &= \sum_{j,k} \alpha_j\alpha_k g(Q^{(j)},Q^{(k)}) \left(\trace(\rho Q^{(j)} Q^{(k)})+\sum_{i\in {\rm supp}(Q^{(j)})\cap {\rm supp}(Q^{(k)})}\trace(\rho\frac{Q^{(j)}Q^{(k)}}{P_i^2}){\rm Var}(e(P_i))\right)\\ &= V_o + V_e,
\end{aligned}
\end{equation}
where $V_o$ is the original variance given in \eqref{variance_O} and bounded by \eqref{Var_o_px}, and $V_e$ is the extra variance caused by the noise, which is bounded by
\begin{equation}
    V_e \le nB^{2k-2}B_e^2\sum_p |\alpha_p|^2,
\end{equation}
where $B_e$ is the upper bound of the error $e(P)$.
\end{proposition}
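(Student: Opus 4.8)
The plan is to treat the randomness in three independent layers---the choice of measurement $P\sim\K$, the measurement outcomes $\mu(P)$, and the device errors $e(P)$---and to exploit the stated assumptions that the $e(P_i)$ are mutually independent, independent of $\rho$ and of the outcomes, and satisfy $\E[e(P_i)]=0$. First I would establish unbiasedness. Writing the noisy single-string estimator as $\prod_{i\in\supp(Q^{(j)})}(\mu(P_i)+e(P_i))$ and expanding the product, every monomial that contains at least one error factor $e(P_i)$ factorizes, by independence, into a piece times $\E[e(P_i)]=0$ and therefore vanishes; only the all-$\mu$ term survives, so $\E_e[\mu'(P,\supp(Q^{(j)}))]=\mu(P,Q^{(j)})$. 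Taking the remaining expectations over $\mu$ and $P\sim\K$ then reproduces the noiseless computation behind Eq.~\eqref{estimator_O}, giving $\E[\hat o]=\trace(\rho O)$.

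For the variance I would start from $\Var(\hat o)=\E[\hat o^2]-[\trace(\rho O)]^2$ and expand $\hat o^2$ into the double sum $\sum_{j,k}\alpha_j\alpha_k f_G(P,Q^{(j)},\K)f_G(P,Q^{(k)},\K)\,\mu'(P,\supp(Q^{(j)}))\mu'(P,\supp(Q^{(k)}))$. Since the $f_G$ factors depend only on $P$, the key quantity is the expectation of the product of the two noisy estimators. Performing the $e$-expectation first and using independence, a factor at an index $i\in\supp(Q^{(j)})\cap\supp(Q^{(k)})$ appears squared and contributes $\E_e[(\mu(P_i)+e(P_i))^2]=\mu(P_i)^2+\Var(e(P_i))$, whereas a factor at a non-overlapping index contributes just $\mu(P_i)$. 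Expanding the resulting product over the overlap and keeping the leading order in the small quantities $\Var(e(P_i))$ splits the expression into (i) the pure-$\mu$ term, whose $\mu$- and $P$-expectation is exactly the noiseless summand $g(Q^{(j)},Q^{(k)})\trace(\rho Q^{(j)}Q^{(k)})$ of Eq.~\eqref{variance_O}, and (ii) a sum over overlapping indices $i$ in which one factor $\mu(P_i)^2$ is replaced by $\Var(e(P_i))$; taking the $\mu$-expectation identifies this with $g(Q^{(j)},Q^{(k)})\,\trace\!\big(\rho\,Q^{(j)}Q^{(k)}/P_i^2\big)\Var(e(P_i))$. Reassembling the double sum yields $\Var(\hat o)=V_o+V_e$, with $V_o$ the original variance and $V_e$ the displayed noise contribution.

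Finally I would bound $V_e$. Using $\Var(e(P_i))\le B_e^2$, the projection convention of Theorem~\ref{thm:var_px} to bound each $|\trace(\rho\,Q^{(j)}Q^{(k)}/P_i^2)|$ by the projected spectral norm of $Q^{(j)}Q^{(k)}/P_i^2$---which removes exactly two powers of the measured operator and hence scales like $B^{2k-2}$ in the proposition's notation---and counting at most $n$ contributing modes in the overlap sum, I would collapse the double sum over $j,k$ down to $\sum_p|\alpha_p|^2$ via the same Cauchy--Schwarz argument and $g$-factor counting used in \ref{appendix_variance}, producing $V_e\le nB^{2k-2}B_e^2\sum_p|\alpha_p|^2$.

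I expect the main obstacle to be step (ii): correctly isolating the first-order-in-$\Var(e)$ contribution, verifying that the $\mu$-expectation of the index-$i$ term really equals $\trace(\rho\,Q^{(j)}Q^{(k)}/P_i^2)$ (i.e.\ that deleting the squared measured factor at mode $i$ corresponds precisely to dividing the operator product by $P_i^2$), and justifying that the neglected higher-order terms in $\Var(e)$ are negligible under the small-noise assumption. The subsequent Cauchy--Schwarz collapse is routine given the appendix, but tracking the degree drop that yields $B^{2k-2}$ rather than $B^{2k}$ requires care.
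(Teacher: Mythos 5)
Your proposal is correct and follows essentially the same route as the paper's proof in the appendix: expand the noisy product $\prod_{i}(\mu(P_i)+e(P_i))$, use independence and zero mean of the errors to get unbiasedness, truncate the second-moment expansion at second order in $e$ so that only the squared factors at indices in $\supp(Q^{(j)})\cap\supp(Q^{(k)})$ contribute $\Var(e(P_i))$, and then bound $V_e$ by the same $g$-factor counting and Cauchy--Schwarz collapse as in \ref{appendix_variance} with one power of $B$ removed per $P_i$. Your organization of the $e$-expectation as an exact factorization over overlap versus non-overlap indices before truncating is a marginally cleaner presentation of the identical computation.
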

\rev{Similar to the explanation of $B$ in Theorem \ref{thm:var_px} and Corollary \ref{p-x theorem}, the error bound $B_e$ can be viewed as the detection precision of the device. Usually we deal with a Gaussian error $e(P)$ with zero-mean, where $B_e$ can be chosen as the standard deviation.}

We prove Proposition \ref{pro:Var_noise} by expanding the noisy measurement outcome product $\mu'(P,\supp(Q)) = \prod_{i\in\supp(Q)} (\mu(P_i)+e(P_i))$ to the second term and analyse the corresponding expectation and variance. The detailed proof is given in \ref{measurewithnoise}.

\section{Numerical Experiments}
\label{sec:numerical}
In this section we numerically verify the theoretical results.
~\footnote{The source code can be found at \\\url{ https://github.com/Gutianren/Efficient-measurement-schemes-for-bosonic-systems}}.

\subsection{Discrete: Nuclei Vibrations}

Here we consider the computation of molecular vibronic spectra using a digital quantum computer. We compare the discrete methods shown 
in Sec.~\ref{sec:BosonDiscrete} and compare their efficiency. 
We consider the $\rm H_2O$ molecule with Hamiltonian representation as
\begin{equation}
    H_{\rm mol} = \sum_s \ket{s}\bra{s}_e \otimes H_s = \sum_s \ket{s}\bra{s}_e \otimes \left(-\sum_I \frac{\nabla^2_I}{2M_I} + V_s(\bm{R}_I)\right),
\end{equation}
under the Born-Oppenheimer Approximation~\cite{mcardle2019digital}\rev{, where $s$ indexes the electrons' eigenstates and $H_s$ is the corresponding nuclear Hamiltonian with $M_I$ being the reduced mass of the nuclei and $V_s(\bm{R}_I)$ being the potential field between the nuclei}. We can further simplify the Hamiltonian $H_s$ by working in mass-weighted normal coordinates, decoupling the vibrational modes, and simplifying the potential function $V_s$, which can then be written as
\begin{equation}
    H_s = \sum_i \frac{\bm{p}^2}{2} + V_s(\bm{q}) \approx \sum_i \frac{\bm{p}^2}{2} + \sum_i\frac{1}{2}\omega_iq_i^2 = \sum_i \omega_i a_i^\dagger a_i.
\end{equation}
Then the Hamiltonian could be regarded as an independent Harmonic oscillator. 
To improve the result, we can consider higher order expansions of $V_s(\bm{q})$ to involve interactions of the Harmonic oscillators  as
\begin{equation}
    V_s(\bm{q}) = \sum_{j=2}^\infty \sum_{i_1,\cdots,i_j=1}^M k_{i_1,\cdots,i_j}q_{i_1}\cdots q_{i_j}.
\end{equation}
To simulate the Hamiltonian, we truncate each mode to a $d$-level system, which provides approximate solutions to the low-lying eigen-energies. Specifically, we consider the lowest $d$-levels for each Harmonic oscillator and then map the Hamiltonian to this subspace.
Consider the ${\rm H_2O}$ molecule vibration model $H_\text{mol}$, 
we numerically implement the qudit local measurement with overlapped grouping optimization and CS for qudits algorithms to calculate the expectation of $H_\text{mol}$ under the oscillator's ground state, and $d$-level GHZ state respectively. 

\begin{figure}[ht]
    \centering
    \subfigure[]{          
        \includegraphics[width=7cm]{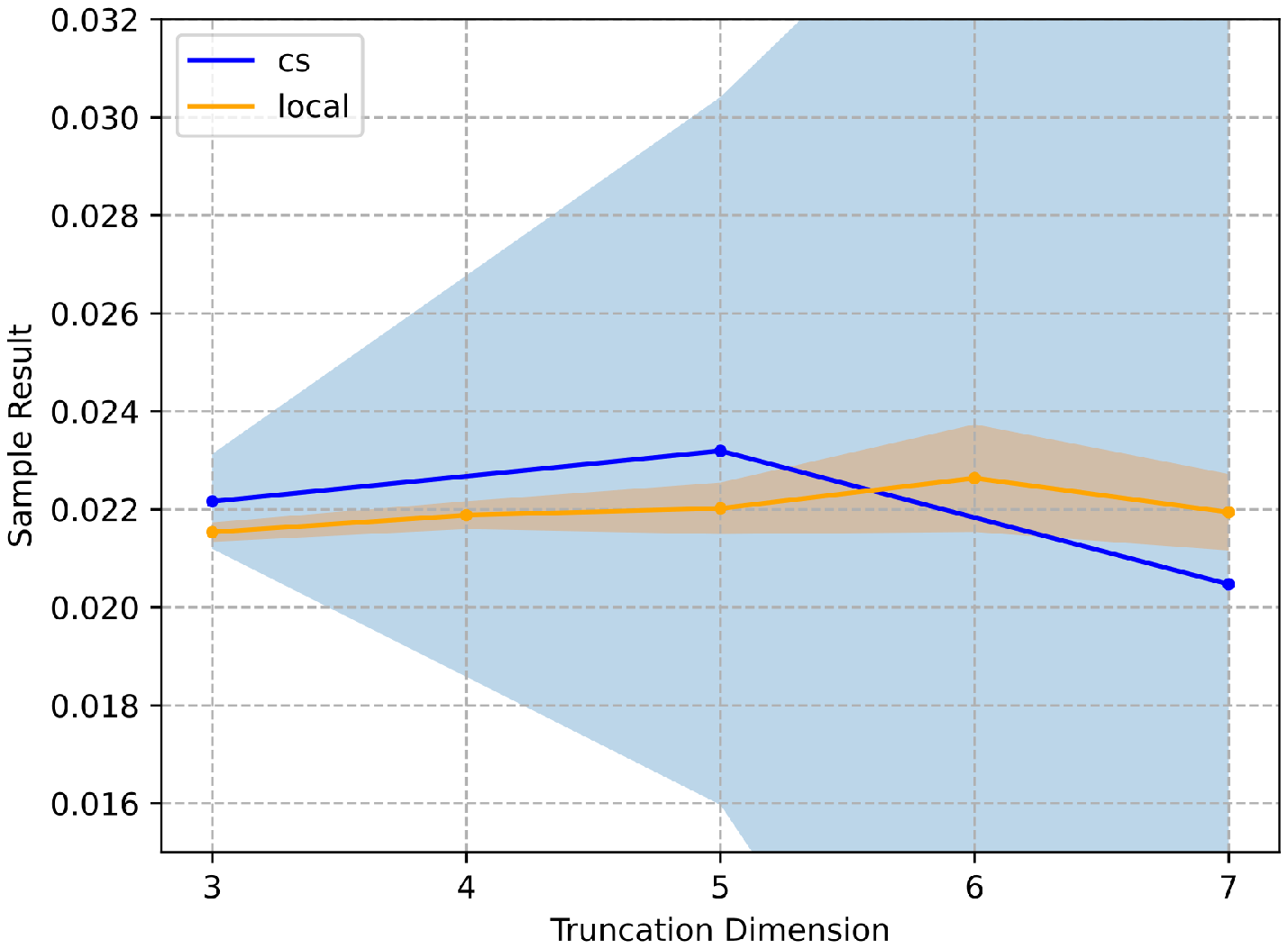}}
    \hspace{15pt}
    \subfigure[]{
        \includegraphics[width=7cm]{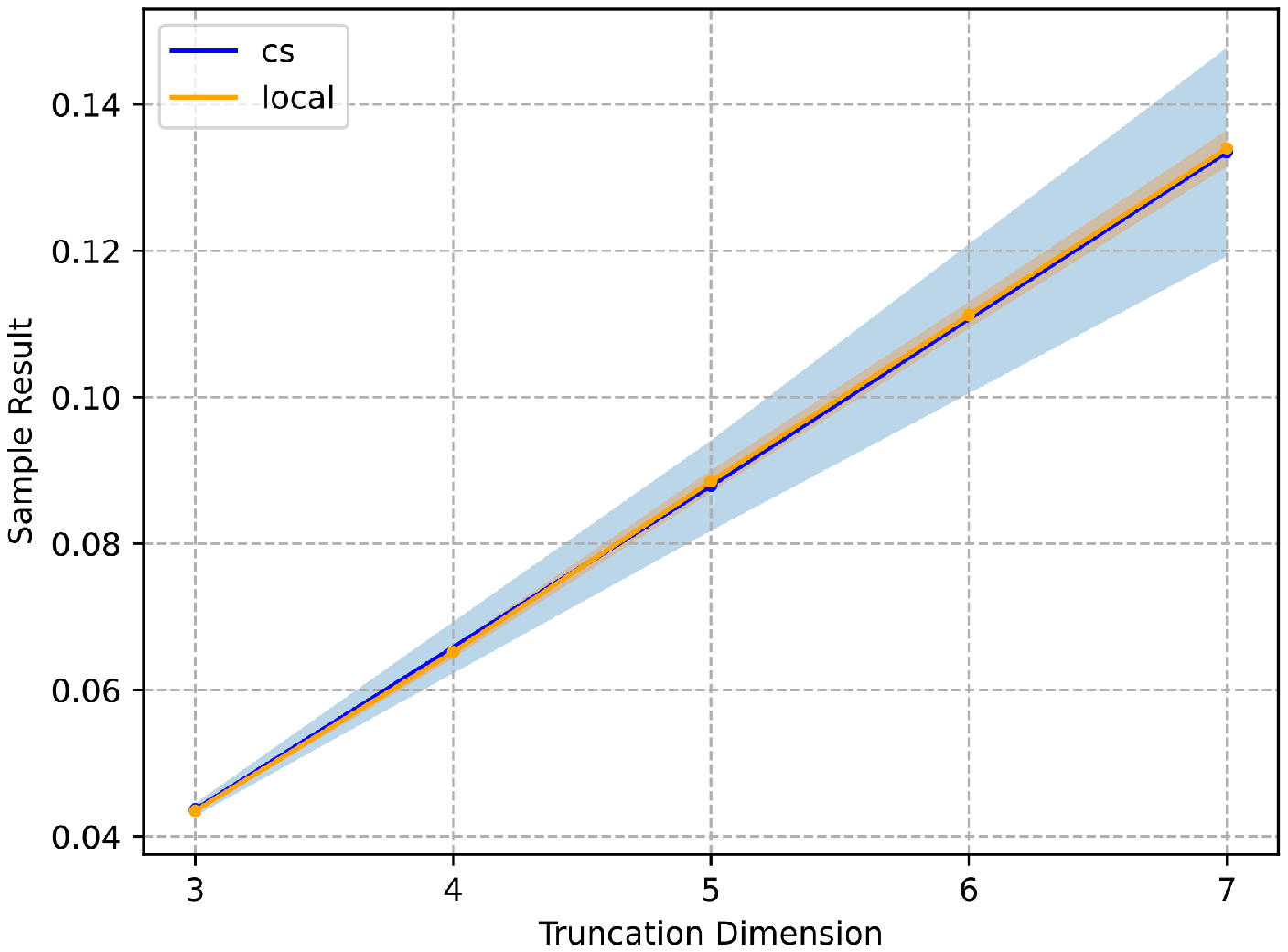}}
\caption{Numerical results for estimating the energy levels of $\rm H_2O$ vibration. (a) estimating the energy expectation of the oscillator's ground state $\ket{000}$ with the two methods (each with $T=10^4$ samples) under different truncated dimensions and their error bands are estimated with $R=10$ repetitions. The exact value of the energy is 0.022. 
The error band for the CS is truncated in (a) since it is too wide for $d>5$.
(b) the results for the energy expectation of the $d$-level GHZ state $\sum_j \ket{jjj} /\sqrt{d}$. Note that the GHZ state changes with the truncation dimension $d$ so the expected value of energy also increases with $d$.}
    \label{fig:numerich2o}
\end{figure}

As shown in Fig.~\ref{fig:numerich2o}, we give the estimation and its errors for the associated algorithms. The result implies the CS method on qudit systems is probably not a good approach to estimate the expectations of a Hamiltonian in the bosonic system due to its huge error. 
The evolution of the standard deviation over the number of samples is given in Fig.~\ref{fig:evolution}. In Fig.~\ref{fig:evolution_ground}, we compare the two methods where we let the truncated dimension be $d = 5$, under which the prediction error due to truncation is negligible for the estimation of ground states. From Fig.~\ref{fig:evolution_qudit}, we can see more clearly that truncation methods bear a much larger variance in exchange for a smaller deviation of expectation by using higher truncated dimensions.

\begin{figure}[htbp]
    \centering
    \subfigure[]{\label{fig:evolution_ground}
    \includegraphics[width=7cm]{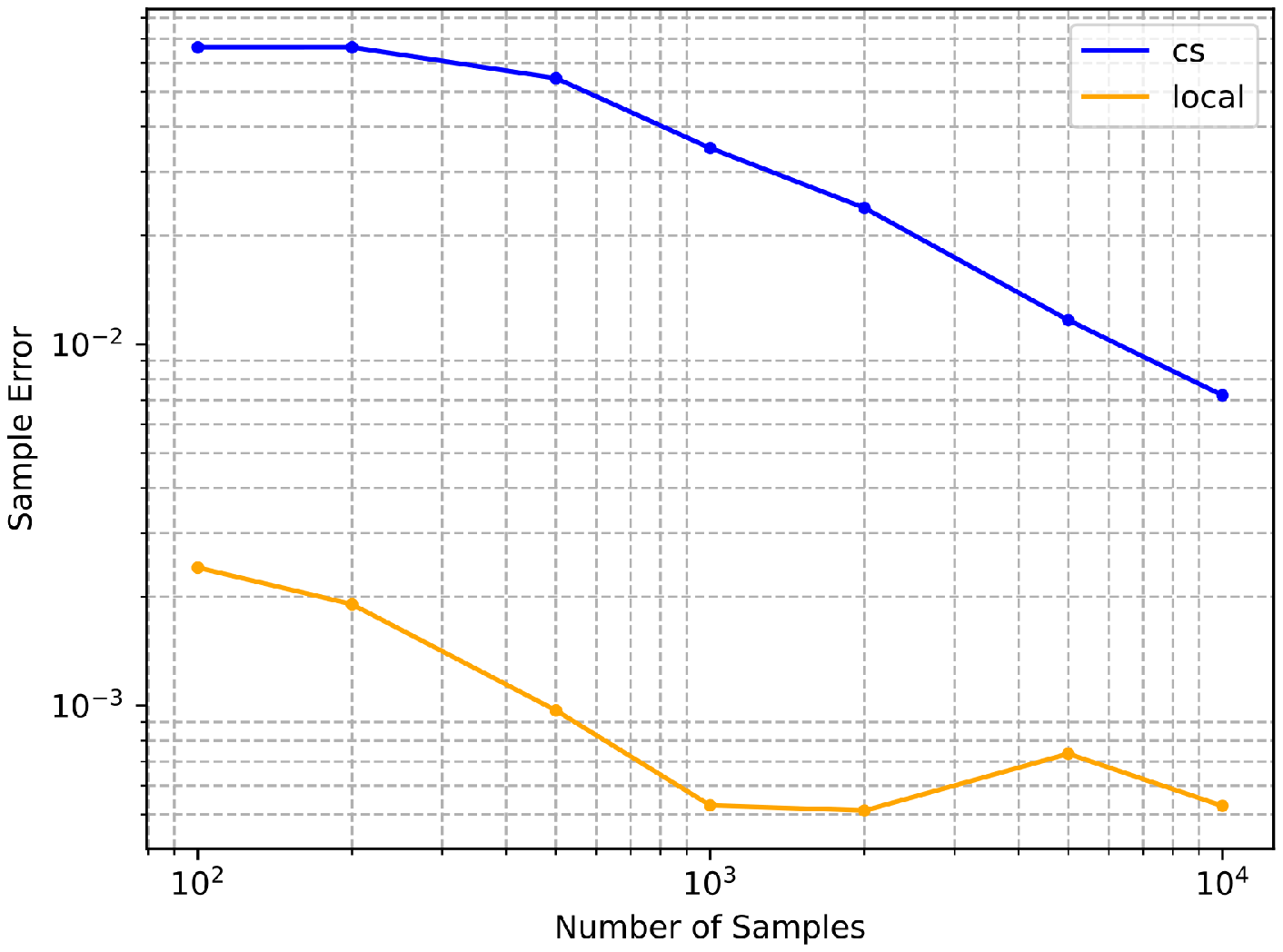}}
    \hspace{15pt}
    \subfigure[]{\label{fig:evolution_qudit}
    \includegraphics[width=7cm]{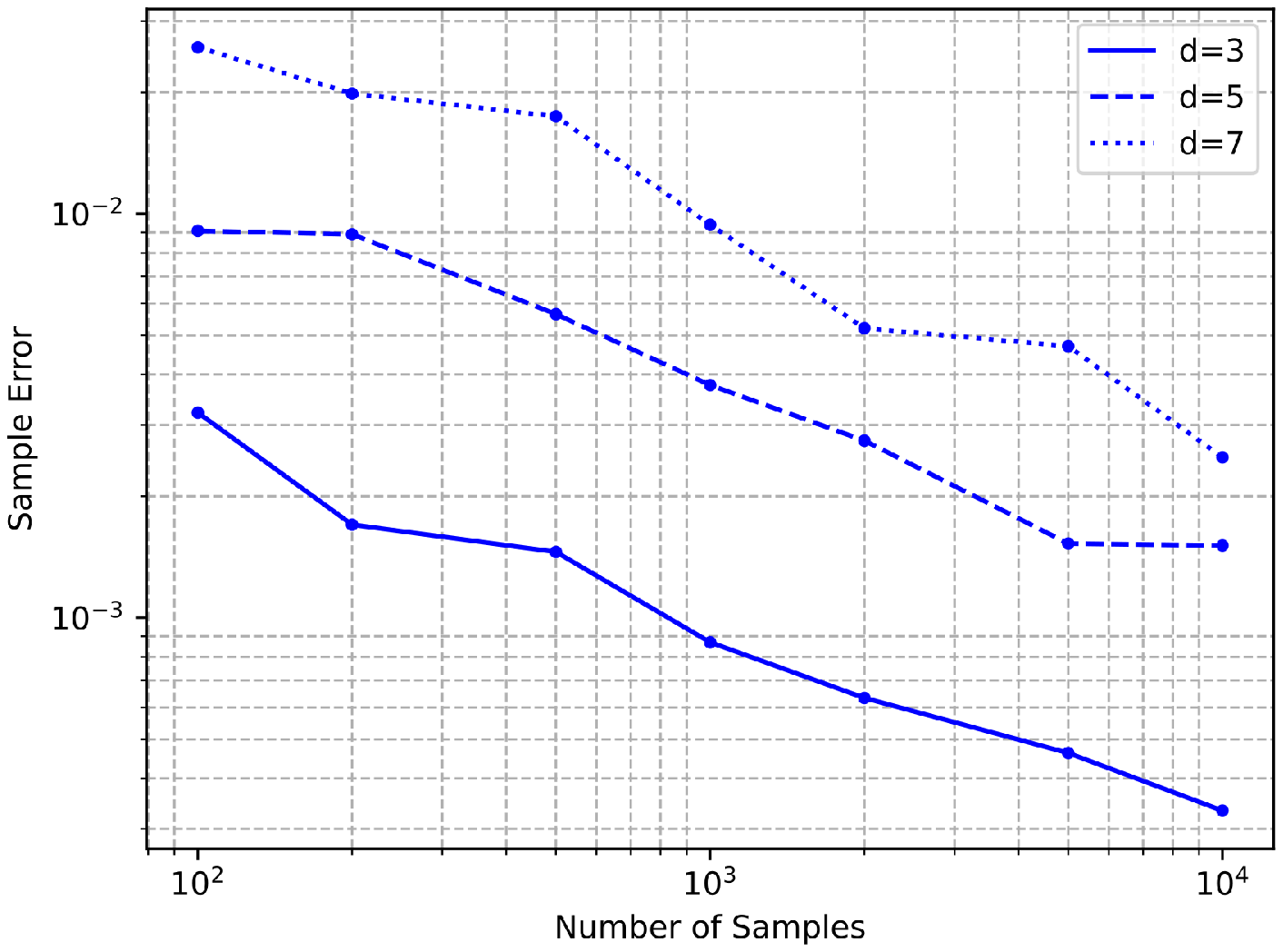}}
    \caption{The evolution of the sampling error over the number of samples $T$ with $R=10$ repetitions. (a) Estimating the oscillator's ground state with the two methods. The truncated dimension $d=5$. (b) Estimating the GHZ state with qudit local measurements.}
    \label{fig:evolution}
\end{figure}

\subsection{Continuous: Optical States}
In the following we consider optical quantum states as the verification of the continuous variable local measurement scheme introduced in Sec.~\ref{sec:BosonContinuous}. 

The trivial verification of unbiasedness of the estimator \eqref{estimator_O} in continuous systems using TMSV states can be found in \ref{Appendix:TMSV_estimation}. In the following, we implement two numerical experiments to show the advantages of the local measurement method in continuous systems. 

At first, we use random multimode Gaussian states to verify our variance bounds~\eqref{Var_o_px}, which is shown in Fig.~\ref{fig:ran_gaussian}. In Fig.~\ref{fig:ran_gaussian_k}, we estimate a set of degree-$K$ random observables with random 10-mode Gaussian states. In the logarithm scale, the error and the degree form a straight line, which is a perfect verification of the exponential relation in  \eqref{Var_o_px}. In Fig.~\ref{fig:ran_gaussian_nk}, we study the relationship between the estimation error and the number of modes with random Gaussian states. The result indicates that the error is nearly independent of the number of modes, and
are mainly determined by the degree $K$. The random observables are normalized such that $\sum_p |\alpha_p|^2 = 1$ for the observables and $\bar x = 0, \trace(V) = 1$ for the states. The estimation error for  $K$-degree observables under $N$-mode states should thus be divided by $(1/\sqrt{N})^K$ for unbiased comparison.

\begin{figure}[htbp]
    \centering
    \subfigure[]{\includegraphics[width=7cm]{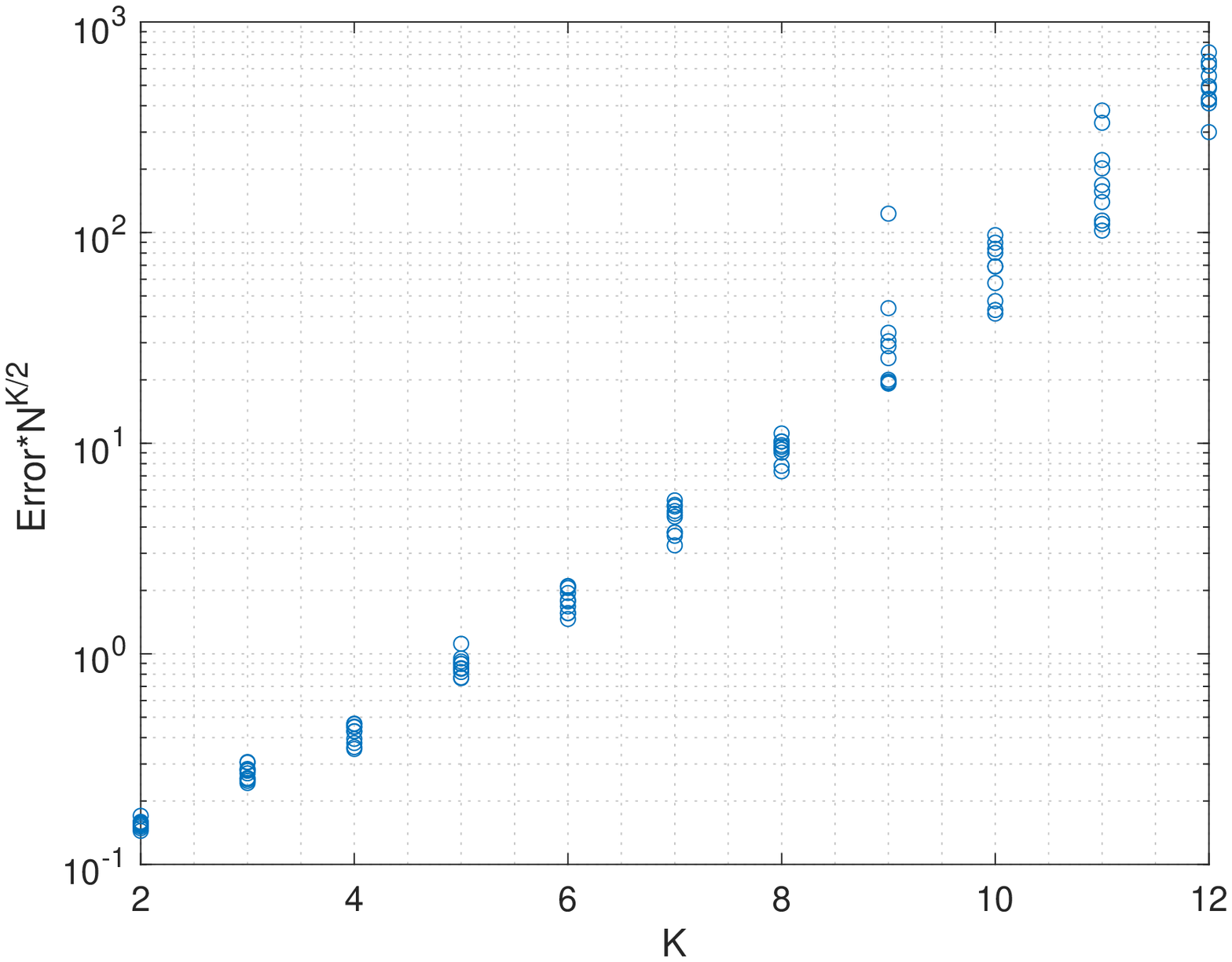}
    \label{fig:ran_gaussian_k}
    }
    \hspace{10pt}
    \subfigure[]{\includegraphics[width=7cm]{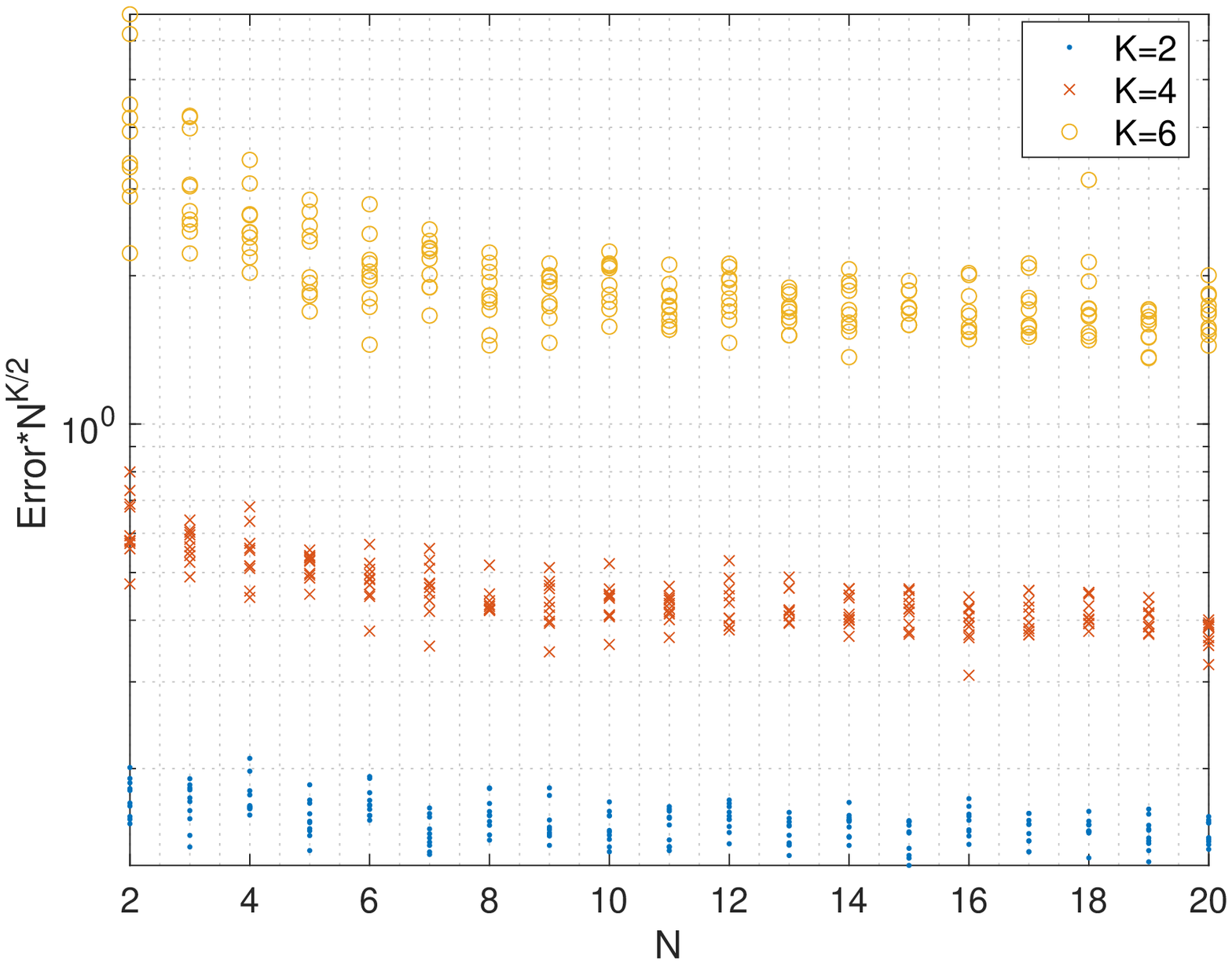}
    \label{fig:ran_gaussian_nk}
    }
    \caption{Estimating the performance of the continuous variable local measurement algorithm using multimode random Gaussian states. (a) the unitless estimation error of random observables with degree $K$. For each degree, observable $O$ with $M=100$ random p-x terms is estimated with $N_{\rm state} = 10$ randomly generated 10-mode Gaussian states with zero mean and normalized covariance matrix. Each data point uses $T=1000$ samples with $R=100$ repetitions. (b) A general picture of estimation error with different number of modes $N$ and degree $K$.}
    \label{fig:ran_gaussian}
\end{figure}

Then we compare the efficiency of different continuous variable local measurement schemes under different number of modes and degrees in Fig.~\ref{fig:squeezedN}. We use an $N$-mode equally squeezed state for testing and estimating on several randomly generated degree-$K$ observables. The result shows that the introducing of overlapped grouping brings us an explicit improvement on the traditional important sampling (is) algorithm, and also yields an advantage over the classical shadow (cs) scheme, especially with a system with a large number of modes and observables with small degrees. 

\begin{figure}[h]
    \centering
    \subfigure[]{\includegraphics[width=7cm]{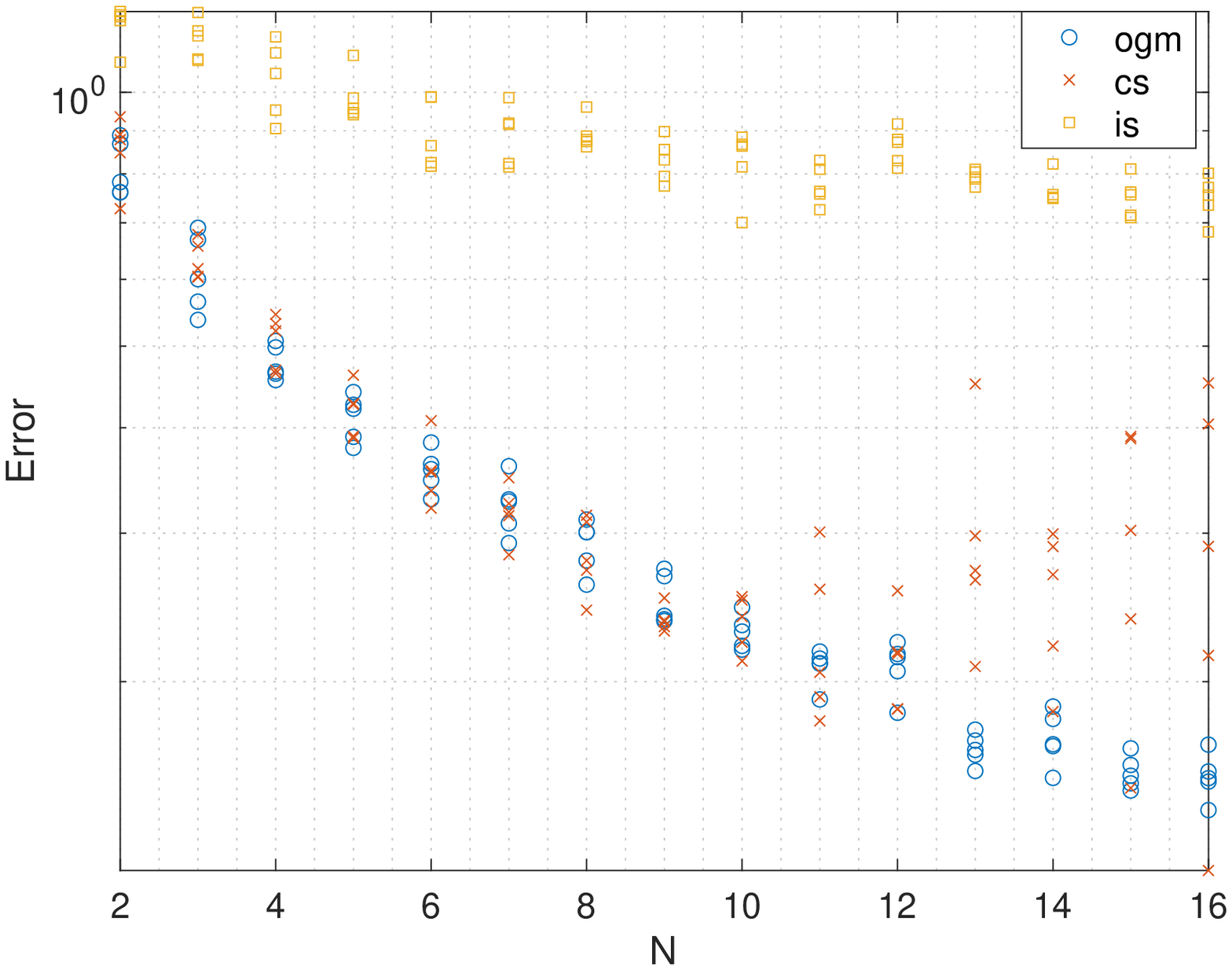}
    \label{fig:squeezedN_is_k2}
    }
    \hspace{10pt}
    \subfigure[]{\includegraphics[width=7cm]{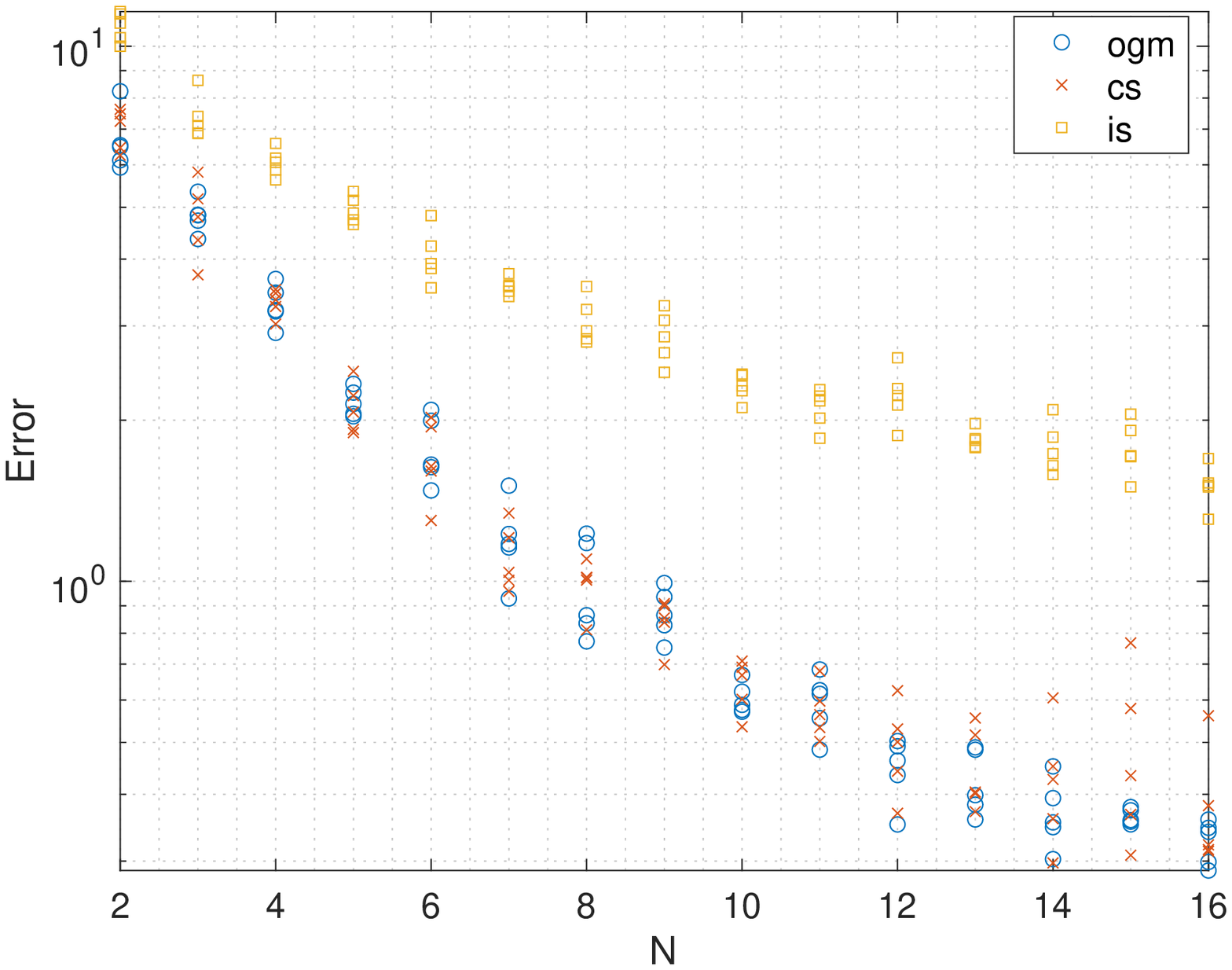}
    \label{fig:squeezedN_is_k4}
    }
    \caption{Estimating the performance of continuous variable local measurements with overlapped grouping optimization (ogm, blue circles), classical shadow scheme (cs, red crosses) and important sampling (is, yellow squares), with equally-squeezed multimode squeezed vacuum states. We estimate 5 normalized observables each with $M=500$ random p-x terms for each figure. The degree $K$ of the observables is (a)$K=2$, and (b)$K=4$.  Each data point is estimated with $T=1000$ independent samplings with $R=100$ repetitions. }
    \label{fig:squeezedN}
\end{figure}

\section{Discussion and outlook}
\label{sec:discussion}
In this work, we discuss different measurement schemes for bosonic systems that are either described by truncated qudits or continuous variable systems, and propose the associated measurement schemes.
We also numerically verify the efficiency of these schemes. In the truncated qudit system, we numerically show the tremendous advantages of overlapped grouping method based on the GGB basis compared to the global cs method for nuclei vibrations. In the continuous variable system, we numerically show that the overlapped grouping method and local cs method both have obvious advantages compared to the traditional importance sampling method.

Our work has wide applications, which is especially useful for estimating the properties of molecular and atomic bosonic systems as well as improving the measurement accuracy and efficiency of optical quantum computers. Other applications include estimating the purity and error of continuous variable quantum systems. We leave an interesting open question that whether there exist global classical shadow variants in truncated qudits and continuous variable systems, with the acceptable bound of variance/error.

\ack{
This work is supported by the National Natural Science Foundation of China (Grant No.~12175003, No.~12147133), and Zhejiang Lab's International
Talent Fund for Young Professionals.
The numerical experiment is supported by the High-performance Computing Platform of Peking University.
}

\appendix 
\newpage
\section{Proof for the variance bound of local measurements}
\label{appendix_variance}
We hereby give a proof for the variance bounds of the general local measurement scheme under the classical shadow scheme (uniform measurement probability), and then specialize it to Proposition \ref{pro:qudit_local} and Proposition \ref{p-x theorem}. The proof is a generalized version of Huang's proof of the variance of classical shadow under Pauli measurements. For the local measurement group of cardinality $D$ with uniform distribution, we decomposing the $k$-local observable $O$ as $O = \sum_p \alpha_p Q_p$, with Eq.(\ref{variance_O}) we have
\begin{equation}
\begin{aligned}
{\rm Var}(\hat o) &= \sum_{p,q}\alpha_p\alpha_q g(Q_p,Q_q)\trace(\rho Q_pQ_q) \le ||\sum_{p,q}\alpha_p\alpha_q g(Q_p,Q_q)Q_pQ_q||_\infty,
\end{aligned}
\label{Var}
\end{equation}
where 
\begin{equation}
\begin{aligned}
g(Q_p,Q_q)&= \dfrac{\sum_{P:Q_p\triangleright P \bigwedge Q_q\triangleright P}\mathcal{K}(P)}{\chi(Q_p)\chi(Q_q)} 
= \dfrac{\dfrac{1}{(D-1)^n}\sum_{P:Q_p\triangleright P \bigwedge Q_q\triangleright P}1}{\dfrac{1}{(D-1)^{2n}}\sum_{P:Q_p\triangleright P}1\sum_{P:Q_q\triangleright P}1},
\end{aligned}
\end{equation}
for general group measurements~\cite{wu2021overlapped}. Note that we choose a uniform probability for all the measurements here as an upper bound of the variance with the optimized probability. In the following, we count the summations. Let $|Q|$ denote the locality of $Q$, which equals the size of set $\cbra{j|Q_j \ne I}$. Since $P$ only varies in qudits/modes $j$ where $Q_j=I$, we have
\begin{equation}
\sum_{P:Q_p\triangleright P}1 = (D-1)^{n-|Q_p|},
\end{equation}
\begin{equation}
\sum_{P:Q_p\triangleright P \bigwedge Q_q\triangleright P}1 = (D-1)^{n-|Q_p\bigcup Q_q|}.
\end{equation}
Set $s = |Q_p\bigcap Q_q|=|Q_p|+|Q_q|-|Q_p\bigcup Q_q|$, we can write $g(Q_p,Q_q) = (D-1)^s$ when $Q_p,Q_q$ are compatible, and $0$ if not.\\

Now returning to Eq. (\ref{Var}). Since our observable is $k$-local, we have
\begin{equation}
\begin{aligned}
&\sum_{p,q}\alpha_p\alpha_q g(Q_p,Q_q)Q_pQ_q \\
&= \sum_{p,q}\alpha_p\alpha_q (D-1)^{|Q_p|+|Q_q|}Q_pQ_q\cdot (D-1)^{k-|Q_p\bigcup Q_q|}\cdot (D-1)^{-k}\\
&= (D-1)^{-k} \sum_{Q_r\in \Q^{\otimes k}}\sum_{Q_p,Q_q\triangleright Q_r}\alpha_p\alpha_q (D-1)^{|Q_p|+|Q_q|}Q_pQ_q \\
&= (D-1)^{-k}\sum_{Q_r\in \Q^{\otimes k}}\left(\sum_{Q_p\triangleright Q_r}\alpha_p (D-1)^{|Q_p|}Q_p\right)^2.
\end{aligned}
\label{sum}
\end{equation}
The second term can be replaced by summing over the identity terms.
From the convexity of the spectral norm, we have
\begin{equation}
\begin{aligned}
&||\sum_{Q_r\in \Q^{\otimes k}}\left(\sum_{Q_p\triangleright Q_r}\alpha_p (D-1)^{|Q_p|}Q_p\right)^2||_\infty \le \sum_{Q_r\in \Q^{\otimes k}}\left(\sum_{Q_p\triangleright Q_r}|\alpha_p| (D-1)^{|Q_p|}||Q_p||_\infty\right)^2.
\end{aligned}
\end{equation}

Using Cauchy-Schwarz:
\begin{equation}
\begin{aligned}
&\sum_{Q_r\in \Q^{\otimes k}}\left(\sum_{Q_p\triangleright Q_r}|\alpha_p| (D-1)^{|Q_p|}||Q_p||_\infty\right)^2\\
&\le  \sum_{Q_r\in \Q^{\otimes k}}\left(\sum_{Q_p\triangleright Q_r}(D-1)^{|Q_p|}\right)
\left(\sum_{Q_p\triangleright Q_r}(D-1)^{|Q_p|}|\alpha_p|^2||Q_p||_\infty^2\right)\\
&\le D^{k}(D-1)^k\sum_p |\alpha_p|^2||Q_p||_\infty^2.
\label{eq:cauchy_sch}
\end{aligned}
\end{equation}
The second inequality of Eq. \eqref{eq:cauchy_sch} holds since there are in total $D^{k}$ number of $Q_r$, and in the second term, each $Q_p$ is calculated $(D-1)^{k-|Q_p|}$ times when summing over $Q_r$. Remembering the factor of $(D-1)^{-k}$ in (\ref{sum}), the final expression is given by
\begin{equation}
{\rm Var}(\hat o) \le D^k\sum_p |\alpha_p|^2 ||Q_p||_\infty^2
\end{equation}

In the qudit case, we use Gell-Mann matrices as the local basis, where $D=d^2$ and the spectral norm is $1$ or $\sqrt{\dfrac{2l}{l+1}}$, 
\rev{hence $||Q_p||_\infty\le (\sqrt{2(d-1)/d})^k$, and
\begin{equation}
\begin{aligned}
\begin{split}
{\rm Var}(\hat o) \le 2^k(d-1)^kd^k\sum_p |\alpha_p|^2 &= 2^k(d-1)^kd^k2^{-k}||O||_2^2 \\
&=(d-1)^kd^{k}||O||_2^2\le (d-1)^kd^{2k}||O||_\infty^2,    
\end{split}
\label{eq:var_a9}
\end{aligned}
\end{equation}}
which gives Proposition \ref{pro:qudit_local}. Eq. \eqref{eq:var_a9} holds since Gell-Mann matrices satisfy $\trace\Lambda_j^\dagger\Lambda_k = 2\delta_{jk}$.

\rev{In the pure p-x string case, the single-mode local measurement basis set $\Q_i^{(j)} \in \{I,x,p\}$, therefore,  we have $D=3$. For each $x$ or $p$ operator, we can choose its upper bound \rev{$B_i^{(j)}$} according to the method in Sec.~\ref{subsec_outmethod_main}. A fixed p-x string $Q_p$ then yields an upper bound of $||Q_p||_\infty \le \prod_{i=1}^n(B_i^{(j)})^{2l_i^{(j)}}$ for $Q_p = \prod_{i=1}^n (Q_i^{(j)})^{l_i^{(j)}}$, which gives us the expression in Theorem \ref{thm:var_px}.}

\section{Extensions and applications}
\label{appendix_expansions_and_application}
In this section we briefly introduce several extensions and applications of the continuous variable local measurement method.

\subsection{Separable observables}
\label{separableobservable}
Here we discuss a special kind of observable that can be separated to the simple summation of the momentum part and the position part. Generally, it is written as
\begin{equation}
O(\bm{p},\bm{x}) = U(\bm{p}) + V(\bm{x}) = U(p_1,\cdots,p_n) + V(x_1,\cdots,x_n),
\label{sep_ob}
\end{equation}
for $n$-mode boson systems. Suppose $U$ and $V$ are expandable, the observable (\ref{sep_ob}) can be easily written in the standard form (\ref{O_px_decomposition})

\begin{equation}
O =  U_0 + \sum_{i=1}^n U_i p_i + \sum_{i,j}U_{ij}p_ip_j + \cdots + 
V_0 + \sum_{i=1}^n V_ix_i + \sum_{i,j}V_{ij}x_ix_j+\cdots.
\end{equation}

Notice that all the momentum terms commute with the p-x string $P_p = p_1p_2\cdots p_n$ and all the potential terms commute with the p-x string $P_x = x_1x_2\cdots x_n$, so using this two measurement basis is already enough for this problem. The estimator is

\begin{equation}
\begin{aligned}
\hat o  = \mathcal{K}(P_p)^{-1}U(P_p^{(1)},P_p^{(2)},\cdots,P_p^{(n)})\delta_{P,P_x}+ \mathcal{K}(P_x)^{-1}V(P_x^{(1)},P_x^{(2)},\cdots,P_x^{(n)})\delta_{P,P_x},
\end{aligned}
\end{equation}

where $P_p^{(i)}$ and $P_x^{(i)}$ denote the $i$th measurement result, $\mathcal{K}(P)$ denotes the probability of choosing this measurement.
Then we estimate the performance of this estimator. Set $\mathcal{K}(P_p) = \lambda$ and $\mathcal{K}(P_x) = 1-\lambda$. We also denote $Q_p^{(J)} = p_{J_1}p_{J_2}\cdots p_{J_n}$ and $Q_x^{(J)} = x_{J_1}x_{J_2}\cdots x_{J_n}$ for $J = (J_1,\cdots,J_n)$. We can then calculate the factor
\begin{equation}
\begin{aligned}
& g(Q_p^{(J)},Q_p^{(K)}) = \frac{\lambda}{\lambda^2} = \frac{1}{\lambda}, \\
& g(Q_p^{(J)},Q_x^{(K)}) = 0, \\
& g(Q_x^{(J)},Q_x^{(K)}) = \frac{1-\lambda}{(1-\lambda)^2} = \frac{1}{1-\lambda}.
\end{aligned}
\end{equation}
According to (\ref{variance_O}), we have
\begin{equation}
\begin{aligned}
{\rm Var}(\hat o) &= 
\frac{1}{\lambda}\sum_{J,K}U_JU_K\trace(\rho Q_p^{(J)} Q_p^{(K)})+
\frac{1}{1-\lambda}\sum_{J,K}V_JV_K\trace(\rho Q_x^{(J)} Q_x^{(K)}) \\
&\le \frac{1}{\lambda}||\sum_{J,K}U_JU_K Q_p^{(J)} Q_p^{(K)}||_\infty +
\frac{1}{1-\lambda}||\sum_{J,K}V_JV_K Q_x^{(J)} Q_x^{(K)}||_\infty \\
&= \frac{1}{\lambda}||U||_\infty^2 + \frac{1}{1-\lambda}||V||_\infty^2,
\end{aligned}
\end{equation}
where $||U||_\infty,||V||_\infty$ are the spectral norms. Optimizing the parameter $\lambda$ and we can get the final bound:
\begin{equation}
{\rm Var}(\hat o) \le (||U||_\infty + ||V||_\infty)^2.
\end{equation}
Setting $U(\bm{p}) = \sum_{i=1}^n p_i^2 / 2$ we get (\ref{bosoninpotential}). 
Sometimes $||U||_\infty$ and $||V||_\infty$ can be very large or even infinity, in which case we would still deal with it by setting the cutting probability. Anyway, it is actually impossible to bound a continuous variable, while in most cases our method works quite well.\\

\subsection{Measuring with noise}
\label{measurewithnoise}
In this section we evaluate the change in result when the continuous measurement has a shift, often caused by precision, that
\begin{equation}
\mu(P)\to \mu'(P) = \mu(P) + e(P),
\end{equation}
with small error $e(P)$, independent of each other and having zero expectation. 
We first consider the change in the predicted expectation of the observable. As
\begin{equation}
\begin{aligned}
&\mathbb{E}_{\mu(P),e(P)}\mu'(P,{\rm supp}(Q)) \\
&= \mathbb{E}_{e(P)}\mathbb{E}_{\mu(P)} [\mu'(P,{\rm supp}(Q))|e(P)] \\
&= \mathbb{E}_{e(P)}\mathbb{E}_{\mu(P)} \prod_{i\in {\rm supp}(Q)}(\mu(P_i) + e(P_i)) \\
&= \mathbb{E}_{e(P)}\trace(\rho\prod_{i\in {\rm supp}(Q)}(P_i + e(P_i))) \\
&= \mathbb{E}_{e(P)}\trace(\rho\left[Q + \sum_{i\in {\rm supp}(Q)}\frac{Q}{P_i}e(P_i) + \sum_{i,j\in {\rm supp}(Q), i<j}\frac{Q}{P_iP_j}e(P_i)e(P_j)+\cdots \right]) \\
&= \trace(\rho Q) + \trace(\rho\left[\sum_{i\in {\rm supp}(Q)}\frac{Q}{P_i}\mathbb{E}_{e(P)}e(P_i) + \sum_{i,j\in {\rm supp}(Q), i<j}\frac{Q}{P_iP_j}\mathbb{E}_{e(P)}e(P_i)e(P_j)+\cdots \right]) \\
&= \trace(\rho Q).
\end{aligned}
\end{equation}
The other terms are 0 because these errors are independent and have an expectation of 0. So the expectation of the measurement result is not affected, indicating that the final expectation is not affected. 

Then we estimate the variance. Similar to the above derivation, we have 
\begin{equation}
\begin{aligned}
&\mathbb{E}_{\mu(P),e(P)}\mu'(P,{\rm supp}(Q)\mu'(P,{\rm supp}(R))\\
&= \mathbb{E}_{e(P)}\mathbb{E}_{\mu(P)} \prod_{i\in {\rm supp}(Q)}(\mu(P_i) + e(P_i)) \prod_{i\in {\rm supp}(R)}(\mu(P_i) + e(P_i))\\
&= \mathbb{E}_{e(P)}\trace(\rho\prod_{i\in {\rm supp}(Q)}(P_i + e(P_i))\prod_{i\in {\rm supp}(R)}(P_i + e(P_i))) \\
&= \mathbb{E}_{e(P)}\trace(\rho\left[QR + \sum_{i\in {\rm supp}(Q)\cup {\rm supp}(R)}\frac{QR}{P_i}e(P_i) + \sum_{i,j\in {\rm supp}(Q)\cup {\rm supp}(R), i<j}\frac{QR}{P_iP_j}e(P_i)e(P_j)+\cdots \right]) \\
&= \trace(\rho QR) + \trace(\rho\left[\sum_{i\in {\rm supp}(Q)\cup {\rm supp}(R)}\frac{QR}{P_i}\mathbb{E}_{e(P)}e(P_i) + \sum_{i,j\in {\rm supp}(Q)\cup {\rm supp}(R), i<j}\frac{QR}{P_iP_j}\mathbb{E}_{e(P)}e(P_i)e(P_j)+\cdots \right]) \\
&\approx \trace(\rho QR) + \trace(\rho\sum_{i,j\in {\rm supp}(Q)\cup {\rm supp}(R), i<j}\frac{QR}{P_iP_j}\mathbb{E}_{e(P)}e(P_i)e(P_j)).
\end{aligned}
\label{Var_noise_1}
\end{equation}
The last row we use the condition that $e(P_i)$ is small so we only expand it to the second level. Now for the extra term, only when $e(P_i) = e(P_j)$ we would get a non-zero expectation:
\begin{equation}
\sum_{i\in {\rm supp}(Q)\cup {\rm supp}(R)}\frac{QR}{P_iP_j}\mathbb{E}_{e(P)}e(P_i)e(P_j) = \sum_{i\in {\rm supp}(Q)\cap {\rm supp}(R)}\frac{QR}{P_i^2}\mathbb{E}_{e(P)}e(P_i)^2 = \sum_{i\in {\rm supp}(Q)\cap {\rm supp}(R)}\frac{QR}{P_i^2}{\rm Var}(e(P_i)).
\label{Var_noise_2}
\end{equation}
With (\ref{v_G_noise})(\ref{Var_noise_1})(\ref{Var_noise_2}) we get the final expression
\begin{equation}
{\rm Var}(\hat o) = \sum_{p,q} \alpha_p\alpha_q g(Q^{(p)},Q^{(q)}) \left(\trace(\rho Q^{(p)} Q^{(q)})+\sum_{i\in {\rm supp}(Q^{(p)})\cap {\rm supp}(Q^{(q)})}\trace(\rho\frac{Q^{(p)}Q^{(q)}}{P_i^2}){\rm Var}(e(P_i))\right).
\label{Var_noise_appendix}
\end{equation}
Note that the variance expression \eqref{Var_noise_appendix} is simply the summation of the original variance $V_{\rm o}$ and an extra variance 
\begin{equation}
    V_e = \sum_{p,q} \alpha_p\alpha_q g(Q^{(p)},Q^{(q)}) \sum_{i\in {\rm supp}(Q^{(p)})\cap {\rm supp}(Q^{(q)})}\trace(\rho\frac{Q^{(p)}Q^{(q)}}{P_i^2}){\rm Var}(e(P_i)).
\end{equation}
Bounding this variance is simple. Supposing the variance of the error also has an upper bound for all modes that $\Var(e(P_i)) \le B_e^2$, for $k$-local system we have
\begin{equation}
    \begin{aligned}
    V_e &\le B_e^2 \sum_{p,q} \sum_{i\in {\rm supp}(Q^{(p)})\cap {\rm supp}(Q^{(q)})} \alpha_p\alpha_q g(Q^{(p)},Q^{(q)}) \trace(\rho\frac{Q^{(p)}Q^{(q)}}{P_i^2})\\
    &= B_e^2 ||\sum_i \sum_{p,q | i\in {\rm supp}(Q^{(p)}\bigcap Q^{(q)})} \alpha_p\alpha_q (D-1)^{|Q^{(p)}|+|Q^{(q)}|+k-|Q^{(p)}\bigcup Q^{(q)}|-k} \frac{Q^{(p)}Q^{(q)}}{P_i^2}||_\infty\\
    &= B_e^2 (D-1)^{-k}||\sum_i \sum_{Q^{(r)} \in \Q^{\otimes k}} \sum_{Q^{(p)},Q^{(q)}\triangleright Q^{(r)}, i\in {\rm supp}(Q^{(p)}\bigcap Q^{(q)})} \alpha_p\alpha_q (D-1)^{|Q^{(p)}|+|Q^{(q)}|} \frac{Q^{(p)}Q^{(q)}}{P_i^2}||_\infty\\
    &= B_e^2 (D-1)^{-k}||\sum_i \sum_{Q^{(r)} \in \Q^{\otimes k}} \left(\sum_{Q^{(p)}\triangleright Q^{(r)}, i\in {\rm supp}(Q^{(p)})} \alpha_p (D-1)^{|Q^{(p)}|} \frac{Q^{(p)}}{P_i}\right)^2||_\infty\\
    &\le B_e^2 (D-1)^{-k}\sum_i \sum_{Q^{(r)} \in \Q^{\otimes k}} \left(\sum_{Q^{(p)}\triangleright Q^{(r)}, i\in {\rm supp}(Q^{(p)})} \alpha_p (D-1)^{|Q^{(p)}|}|| \frac{Q^{(p)}}{P_i}||_\infty\right)^2\\
    &\le B_e^2 (D-1)^{-k}\sum_i \sum_{Q^{(r)} \in \Q^{\otimes k}} \left(\sum_{Q^{(p)}\triangleright Q^{(r)}} \alpha_p (D-1)^{|Q^{(p)}|}|| \frac{Q^{(p)}}{P_i}||_\infty\right)^2\\
    \end{aligned}
\end{equation}
The squared term is exactly the one in Appendix \ref{appendix_variance}, except for the $P_i$ term, which reduced the spectral norm by a factor of $B$. Thus the final expression for the bound of extra variance is given by
\begin{equation}
    V_e \le nB^{2k-2}B_e^2\sum_p |\alpha_p|^2
\end{equation}

\subsection{Estimating the purity of a continuous variable quantum state}
Suppose we want to prepare state $\rho_0$ but we in fact get the state $\rho$, but in this case we do not know the form of the error $E_j$, we can still estimate the purity of our quantum state. 

In particular, we suppose our quantum state is shifted to
\begin{equation}
\rho = (1-p)\rho_0 + p\rho_1,
\end{equation}
where $p$ is some unknown small quantity and $\trace(\rho_0\rho_1) = 0$. We then calculate the purity
\begin{equation}
\trace\rho^2 = (1-p)^2\trace\rho_0^2 + p^2\trace\rho_1^2.
\label{purity_0}
\end{equation}
Using $\rho_0$ as the observer, we are able to estimate the quantity
\begin{equation}
\trace(\rho_0\rho) = (1-p)\trace\rho_0^2.
\label{purity_1}
\end{equation}
With (\ref{purity_0})(\ref{purity_1}), we can directly get
\begin{equation}
\trace\rho^2 = \frac{[\trace(\rho_0\rho)]^2}{\trace\rho_0^2} + p^2\trace\rho_1^2.
\end{equation}
Now we want to neglect the $p^2$ term, so we estimate
\begin{equation}
1-\trace\rho^2 = \frac{\trace\rho_0^2-[\trace(\rho_0\rho)]^2}{\trace\rho_0^2} - p^2\trace\rho_1^2 = 2p-p^2(1+\trace\rho_1^2) = \Theta(p),
\end{equation}
which enables us to neglect the $p^2$ terms. And the final expression is
\begin{equation}
\trace\rho^2 = \frac{[\trace(\rho_0\rho)]^2}{\trace\rho_0^2}.
\end{equation}

\subsection{Estimating continuous errors}

Another application of this technique is to directly find the errors in our states. In the discrete case, if we want to prepare some state $\rho_0$ but unfortunately some known systematic error occurs that we actually get the state
\begin{equation}
\rho = (1-p)\rho_0 + \sum_j p_j E_j\rho_0E_j^\dagger,
\end{equation}
where $\sum_j p_j=p$ is the error rate. We can write it in a simpler form by setting $p_0 = 1-p$ and $E_0 = 1$ and get
\begin{equation}
\rho = \sum_j p_j E_j\rho_0E_j^\dagger.
\end{equation} 
Suppose we have already known the error form $E_j$, what we would like to do is to estimate the error probability $p_i$ by measuring a set of observables $\{M_j\}$ on copies of $\rho$. Since we must suppose these errors can be corrected, we can always carefully choose these observables so that
\begin{equation}
E_j^\dagger M_kE_j = \sum_l c^j_{kl} M_l.
\end{equation}
Measuring these observables and we get the expectation:
\begin{equation}
\trace(\rho M_k) = \sum_j p_j \trace(\rho_0E_j^\dagger M_kE_j) = \sum_{j,l}p_jc^j_{kl}\trace(\rho_0M_l).
\end{equation}
Now we can estimate all the $\trace(\rho M_k)$s precisely, and the coefficients $c^j_{kl}$s and $\trace(\rho_0M_l)$s are already known, so the above equations are linear equations of all the $p_j$, and can be exactly solved as long as the rank of the coefficient matrix is no less than the number of types of possible errors.

In the continuous-variable case, generally the error may depend on some random parameter $\theta$:
\begin{equation}
\rho = \sum_j \int d\theta p_j(\theta)E_j(\theta)\rho_0E_j^\dagger(\theta).
\end{equation}

Supposing we can find a set of observables $\{M_j\}$ that
\begin{equation}
E_j^\dagger(\theta) M_kE_j(\theta) = \sum_l c^j_{kl}(\theta) M_l.
\end{equation}
Then we can estimate
\begin{equation}
\begin{aligned}
\trace(\rho M_k) &= \sum_j \int d\theta p_j(\theta)\trace(\rho_0E_j^\dagger M_kE_j)\\& = \sum_j \int d\theta p_j(\theta)\sum_{l}c^j_{kl}(\theta)\trace(\rho_0M_l) \\&= \sum_j\int d\theta p_j(\theta) d_{jk}(\theta),
\end{aligned}
\label{measure_continuous_error}
\end{equation}
where $d_{jk}(\theta)=\sum_{l}c^j_{kl}(\theta)\trace(\rho_0M_l)$ known continuous function of $\theta$. With this equation, we can estimate properties of the error distribution $p_j(\theta)$. For example, in the p-x case, supposing our quantum state may get a position shift
\begin{equation}
\rho = \int da p(a) e^{ipa}\rho_0 e^{-ipa}.
\end{equation}
Setting observables $M_k = x^k$, as
\begin{equation}
e^{ipa}x^k e^{-ipa} = (x+a)^k.
\end{equation}
(\ref{measure_continuous_error}) gives
\begin{equation}
\trace(\rho x^k) = \int da p(a) \trace(\rho_0 (x+a)^k).
\end{equation}
For $k=1$, we can get the expectation of the position shift $a$
\begin{equation}
\trace(\rho x) = \int da p(a) \trace(\rho_0 (x+a)) = \trace(\rho_0 x) + \mathbb{E}a.
\end{equation}
For $k=2$, we can similarly estimate the variance, which has been enough for the gaussian error case. For higher $k$s, we can get the $k$th moment of $a$. \\

\subsection{Dealing with discrete and continuous case together}

In some models, the observer may contain both continuous variables and discrete variables. For example, the Hamiltonian of a particle in a electromagnetic field may contain both its position, momentum and spin, i.e.

\begin{equation}
O = H = \frac{1}{2}\textbf{p}^2 + V(\textbf{x},\textbf{S}).
\end{equation}

Intuitively, we construct 'mixing strings' similar to Pauli strings, GGB strings and p-x strings, allowing all kinds of variables to be put together. We can thus apply the above method to estimate the observer. In the general case the variance bound is

\begin{equation}
{\rm Var}(\hat o) \le (d^2+2)^k{\rm max}(2^k,B^{2K})\sum_p|\alpha_p|^2,
\end{equation}
for $k$-local $K$-degree observer, where $d$ is the dimension of the discrete variable, $B$ is the bound of continuous variables and $\alpha_p$s are the decomposition coefficients of the observer to the mixing strings. If we can separate the continuous part and the discrete part, the variance bound would reduce to
\begin{equation}
{\rm Var}(\hat o) \le d^{3k}||O^{\rm discrete}||_\infty^2 + 3^kB^{2K}\sum_p|\alpha_p^{\rm continuous}|^2.
\end{equation}

\section{Clifford simulation in higher dimension}
\label{dCliffordsimulation}
In the numerical experiment of boson shadow tomography with global measurements, we have to efficiently perform a Clifford circuit on a classical computer. Here we give the exact algorithm of simulating the $d$-dim Clifford circuit and sampling $d$-dim Clifford operators. They are general versions of the qubit Clifford simulation ~\cite{aaronson2004improved,van2021simple}. The overall complexity of the algorithm is $O({\rm poly}(n))$ for $n$ qudits. 

\subsection{Simulating $d$-dim Clifford circuits}

We first introduce higher dimensional stabilizers(or qudit stabilizers)~\cite{gheorghiu2014standard,hostens2005stabilizer,gunderman2020local}. For an $n$-qudit state $\ket{\phi}$, we define the stabilizers as
\begin{equation}
    \Stab(\ket{\phi}) = \{S\in \Pauli_d^n \ | \ S\ket{\phi} = \ket{\phi} \},
\end{equation}
which uniquely determines the state $\ket{\phi}$. We call such a state a qudit stabilizer state. 
All stabilizers of $\ket{\phi}$ form an Abelian group. Moreover, it can be proved that when $d$ is a prime, the number of generators of $\Stab(\ket{\phi})$ is exactly $n$. We denote a single stabilizer by the tabular form
\begin{equation}
    S := (a_1 \ a_2 \ \cdots \ a_n \  b_1 \ b_2 \ \cdots \ b_n \ c) = \omega^c\cdot X_1^{a_1}Z_1^{b_1} \otimes X_2^{a_2}Z_2^{b_2}\otimes\cdots\otimes X_n^{a_n}Z_n^{b_n},
\end{equation}
with $0\le a_i,b_i,c < d$. As we know, qudit Clifford Group $\mathcal{C}_d^n$ is the normalizer of qudit Pauli group $\Pauli_d^n$, so stabilizers become stabilizers under qudit Clifford transformation. We give all possible single mode and double mode transformation here:
\begin{equation}
    \begin{aligned}
         F X^a Z^b F^\dagger &= \omega^{-ab}X^{-b}Z^a, \\
         P X^a Z^b P^\dagger &= \omega^{a(a+\rho_d)/2}X^aZ^{a+b}, \\
         \CNOT_{12} (X_1^{a_1} Z_1^{b_1} \otimes X_2^{a_2} Z_2^{b_2}) \CNOT_{12}^\dagger &= X_1^{a_1}Z_1^{a_2-b_2} \otimes X_2^{a_1+b_1} Z_2^{b_2},
    \end{aligned}
\end{equation}
as the Clifford group $C_d^n$ is generated by $C_d^n = \langle \omega I,F,P,\CNOT\rangle$. For prime $d$, we write the $n$ generators of $\Stab(\ket{\phi})$ as a $n \times (2n+1)$ matrix:
\begin{equation}
    \begin{pmatrix}
a_{1,1} &  \cdots & a_{1,n} & 
b_{1,1} &  \cdots & b_{1,n} & c_1\\
\vdots  &  \ddots & \vdots  & 
\vdots  &  \ddots & \vdots  & \vdots \\
a_{n,1} &  \cdots & a_{n,n} & 
b_{n,1} &  \cdots & b_{n,n} &  c_n\\
    \end{pmatrix}.
    \label{stabilizer_tabular}
\end{equation}

To efficiently do the measurements, we apply the method introduced in \cite{aaronson2004improved} that add $n$ 'destabilizers' above the stabilizers, so the matrix becomes:
\begin{equation}
    \begin{pmatrix}
    a_{1,1} &  \cdots & a_{1,n} & 
b_{1,1} &  \cdots & b_{1,n} & c_1\\
\vdots  &  \ddots & \vdots  & 
\vdots  &  \ddots & \vdots  & \vdots \\
a_{n,1} &  \cdots & a_{n,n} & 
b_{n,1} &  \cdots & b_{n,n} &  c_n\\

a_{n+1,1} &  \cdots & a_{n+1,n} & 
b_{n+1,1} &  \cdots & b_{n+1,n} & c_{n+1}\\
\vdots  &  \ddots & \vdots  & 
\vdots  &  \ddots & \vdots  & \vdots \\
a_{2n,1} &  \cdots & a_{2n,n} & 
b_{2n,1} &  \cdots & b_{2n,n} &  c_{2n}\\

    \end{pmatrix}
=
\begin{pmatrix}
R_1 \\ \vdots \\ R_n \\ R_{n+1} \\ \vdots \\ R_{2n} \\ R_{2n+1}
\end{pmatrix},
\end{equation}
where $R_1,\cdots, R_n$ describe the destabilizers and $R_{n+1},\cdots, R_{2n}$ describe the stabilizers. Initially the tabular is set as $a_{i,i} = b_{n+i,i} = 1$ for $i = 1,\cdots,n$ with other elements all being 0, which denotes the state $\ket{0\cdots 0}$. Then all the operations in the Clifford circuit map to different transformations of the table:\\

\textbf{$F$ on qudit $k$:} For all $i=1,\cdots,2n$, set $c_i := (c_i - a_ib_i) \mod d$, then set $a_{i,k},\  b_{i,k} := (-b_{i,k})\mod d,\  a_{i,k}$.\\

\textbf{$P$ on qudit $k$:} For all $i = 1,\cdots, 2n$, set $c_i := (c_i + a_i(a_i+\rho_d)/2) \mod d$, then set $b_{i,k} := (a_{i,k} + b_{i,k}) \mod d$.\\

\textbf{$\CNOT$ from qudit $k$ to qudit $l$:} For all $i =1,\cdots,2_n$,  set $a_{i,l}:=(a_{i,l}+a_{i,k})\mod d$ and $b_{i,k} := (b_{i,k} - b_{i,l})\mod d$.\\

\textbf{Measure qudit $k$:} Check whether there exists $p \in \{n+1,\cdots,2n\}$ that $a_{p,k} > 0$. If so, then getting all $d$ possible results are equally random. Otherwise, the result is deterministic.
\begin{itemize}
    \item[(1)] \textbf{Random Case:} For a random measurement result, the state must be changed after the measurement. We first use row $p$ to eliminate other row $i$ in the stabilizer that $a_{i,k}>0$, which is performed by multiplying $R_p^{-a_{i,k}}$ and $R_i^{a_{p,k}}$ and set it as the new $R_i$. After that, set $R_{p-n} := R_p$ and then $a_{p,j} := 0, b_{p,j} := \delta_{pj}$ for $j = 1,\cdots,n$. At last, equally choose $c_p \in \{0,1,\cdots, d-1\}$ as the measurement result.
    
    \item[(2)] \textbf{Deterministic Case:} The only thing we need is to determine the measurement result. This is given by solving $\gamma$ from the equation $R_{n+1}^{\gamma_1} \cdot R_{n+2}^{\gamma_2} \cdot \cdots \cdot R_{2n}^{\gamma_n} = \omega^\gamma Z_k$. By using the destabilizers, we can prove that the $\gamma_i$s are exactly the solution of the linear congruence equation
    $\alpha_i \gamma_i \equiv \beta_i(\mod d)$ with the parameters $\alpha_i = \symp(R_i,R_{i+n})$ and $\beta_i = \symp(R_i,Z_k)$ for $i = 1,\cdots,n$. Here the sympletic inner product $\symp(R_p,R_q) = \sum_j (R_{p,j+n}R_{q,j} - R_{p,j}R_{q,j+n})\mod d$ gives the extra phase. The measurement outcome is then $(-\gamma)\mod d$.
    
    \item[(3)] \textbf{Calculate the expectation of Pauli observable $P$:} Get the expectation of a Pauli observable is simple for stabilizer states. If $P$ does not commute with all the stabilizers the result must be 0. Otherwise, we just replace the $Z_k$ in the deterministic case with $P_k$.

\end{itemize}

\subsection{Sampling $d$-dim Clifford operators}

Another technique is sampling a $d$-dimension Clifford operator, which similar to its qubit counterpart, is done by sampling the result stabilizers from applying the Clifford operator to $X_i$ and $Z_i$ and doing a Gaussian elimination to reconstruct the Clifford operator~\cite{van2021simple}. To describe the detail, we still use the tabular form (\ref{stabilizer_tabular}) to denote the stabilizers. The process is then given below:

\begin{enumerate}
    \item \textbf{Sampling the resulting tabular:} We first sample the tabular of $C(X_i) = CX_iC^\dagger$ and $C(Z_i) = CZ_iC^\dagger$. As Clifford transformation would not change the commutation relation of stabilizers, the limitation here is the result must satisfy the commutation rules $C(X)C(Z) = \omega^{-1}C(Z)C(X)$. Suppose we have successfully sampled $C(X)$ and $C(Z)$ of the first qubit, which are two $n$-dit numbers with random phase $\omega^j,j\in\{0,\cdots,d-1\}$, we use the following process to restore the commutation rules:
    
    \begin{itemize}
        \item[(1)] calculate the sympletic inner product $S = \symp(C(X),C(Z))$ and the sympletic inner product on qudit $j$ $S_j = (C(X)_{j+n}C(Z)_{j} - C(X)_jC(Z)_{j+n})\mod d$, where $j$ is the smallest integer that at least one of $C(Z)_j$ and $C(Z)_{j+n}$ is larger than 0. 
        \item[(2)] As the sympletic inner product we need is $-1$, we need to rearrange $C(Z)_j$ and $C(Z)_{j+n}$ until the new sympletic inner product on this qudit $S_j' = (-1 - (S-S_j))\mod d$. Obviously this linear congruence equation has exactly $d$ solutions because a total of $d^2$ possible values of $C(Z)_j$ and $C(Z)_{j+n}$ uniformly map to $d$ possible values of $S_j$. Therefore, to ensure the uniformity of the sampling we need to uniformly distribute the $d^2$ possible values of $C(X)_j$ and $C(X)_{j+n}$ to all $d$ solutions. A simple method to solve this problem is to choose $C(Z)_j$ and $C(Z)_{j+n}$ as the $k$th solution by order of the magnitude of the combined 2-dit number(or any other order designed), where $k$ is the index of the original $C(Z)_j$ and $C(Z)_{j+n}$ among all $d$ possible values of $C(Z)_j$ and $C(Z)_{j+n}$ that give the sympletic inner product $S_j$ by the same order. 
    \end{itemize}
    
    \item \textbf{Gaussian Elimination:} We use Clifford circuits composed of basic Clifford operators to transform $C(X)$ and $C(Z)$ back to $X$ and $Z$. Complex conjugating these operators then give us the circuit form of the original Clifford operator $C$, which can be directly used in Clifford simulation. The process is given as following:
    
    \begin{itemize}
        \item[(1)] \textbf{Clear $Z$ component of $C(X)$:} For each $j$ that $C(X)_{j+n} > 0$, if $C(X)_j > 0$, we apply $P$ gate on this qudit for $k = \inv(C(X)_j,-C(X)_{j+n})$ times, where $\inv(a,b)$ denotes the solution $x$ satisfying $0\le x<d$ of the linear congruence equation $ax\equiv b (\mod d)$. Otherwise, apply $F$ gate for one time.
        
        \item[(2)] \textbf{Clear $X$ component of $C(X)$:}
        Find $J = \{j<n \ | \ C(X)_j>0\}$. If $J$ has more than one element, for each odd $i<|J|$  we apply the $\CNOT$ gate for $k = \inv(C(X)_{J_i},C(X)_{J_{i+1}})$ times from qudit $J_i$ to qudit $J_{i+1}$. Repeating this process then eliminates all the $X$ components except the first one with circuit depth less than $O(\log_2 n)$. After that, we apply $\CNOT$ gate from qudit $J_1$ to qudit 1 for one time and then from qudit 1 to qudit $J_1$ for $k = \inv(C(X)_{J_1},-C(X)_{J_1})$ times to swap the $X$ component to the first position. 
        
        \item[(3)] \textbf{Clear components of $C(Z)$:} If $C(Z)$ is not at the form of $Z_1^{C(Z)_{n+1}}$(neglecting phase) we need this process. We first use an $F$ gate on the first qudit to protect $C(X)$, then we use the same process as eliminating the components of $C(X)$ to eliminate the components of $C(Z)$. At the end, we apply an $F$ gate on qudit 1 again to restore $C(X)$.
        
        \item[(4)] \textbf{Eliminate the powers and phases:} Now $C(X)$ and $C(Z)$ are at the forms of $X^a$ and $Z^b$(neglecting phase). To transform them back to $X$ and $Z$, we apply the following Clifford operator $C$ on the first qudit:
        \begin{equation*}
            C = P^aFP^bFP^kF, \ k = \inv(d-b,d-1),
        \end{equation*}
        with phases $\omega^\alpha$ and $\omega^\beta$. Finally, we apply the $X$ gate for $(-\alpha)\mod d$ and $Z$ gate for $(-\beta)\mod d$ to eliminate the phases. 
    \end{itemize}
    
    \item \textbf{Finishing the process for $n$ qudits:} The above process successfully restore $C(X_1)$ and $C(Z_1)$ to $X_1$ and $Z_1$. For the succeeding qudits we just need to neglect the qudits that have been restored and do the samplings and eliminations on a smaller system.

\end{enumerate}

\section{Estimating on TMSV states}
\label{Appendix:TMSV_estimation}
In this appendix section we show some additional numerical results on a two mode squeezed vacuum state (TMSV) to verify the unbiasedness of the expectation~\eqref{estimator_O} in continuous variable system. TMSV states are the simplest multimode optical model, which are the continuous counterparts of the discrete EPR states and have wide applications in quantum optical experiments~\cite{weedbrook2012gaussian,braunstein2005quantum}. A TMSV state is generally written as
\begin{equation}
    \ket{r} = S_2(r)\ket{0} = \exp[r(\hat b_1 \hat b_2 - \hat b_1^\dagger \hat b_2^\dagger)/2]\ket{0},
\end{equation}
where $r$ is the squeezing parameter, and $\ket{0}$ is the two mode vacuum state. The TMSV state is a gaussian state that can be efficiently simulated on classical computers. It has zero mean and covariance matrix
\begin{equation}
    \bm{V}_{\rm TMSV} = 
    \begin{pmatrix}
        \nu {I} & \sqrt{\nu^2-1}{Z}\\
        \sqrt{\nu^2-1}{Z} & \nu {I}
    \end{pmatrix},
\end{equation}
where $\nu = \cosh(2r)$, $I$ and $Z$ represent single-qubit identity and Pauli-Z in qubit system respectively. We now use the continuous variable local measurements to estimate the average photon number $\bar n$. To verify the result, theoretically we have 
\begin{equation}
    \bar n = \sinh^2 r.    
\end{equation}

\begin{figure}[htbp]
    \centering
    \subfigure[]{\includegraphics[width=7cm]{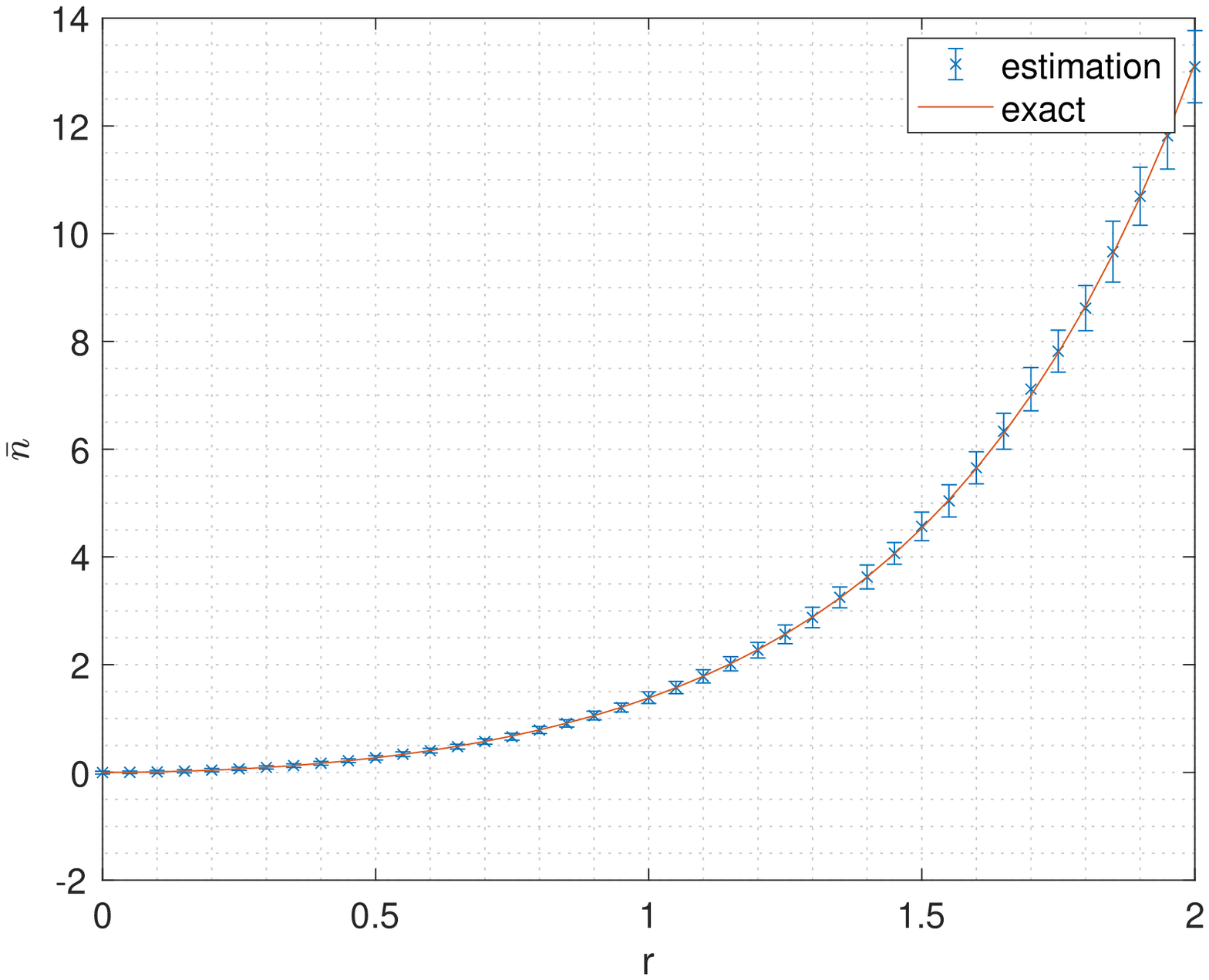}
    \label{fig:TMSV_r}
    }
    \hspace{10pt}
    \subfigure[]{\includegraphics[width=7cm]{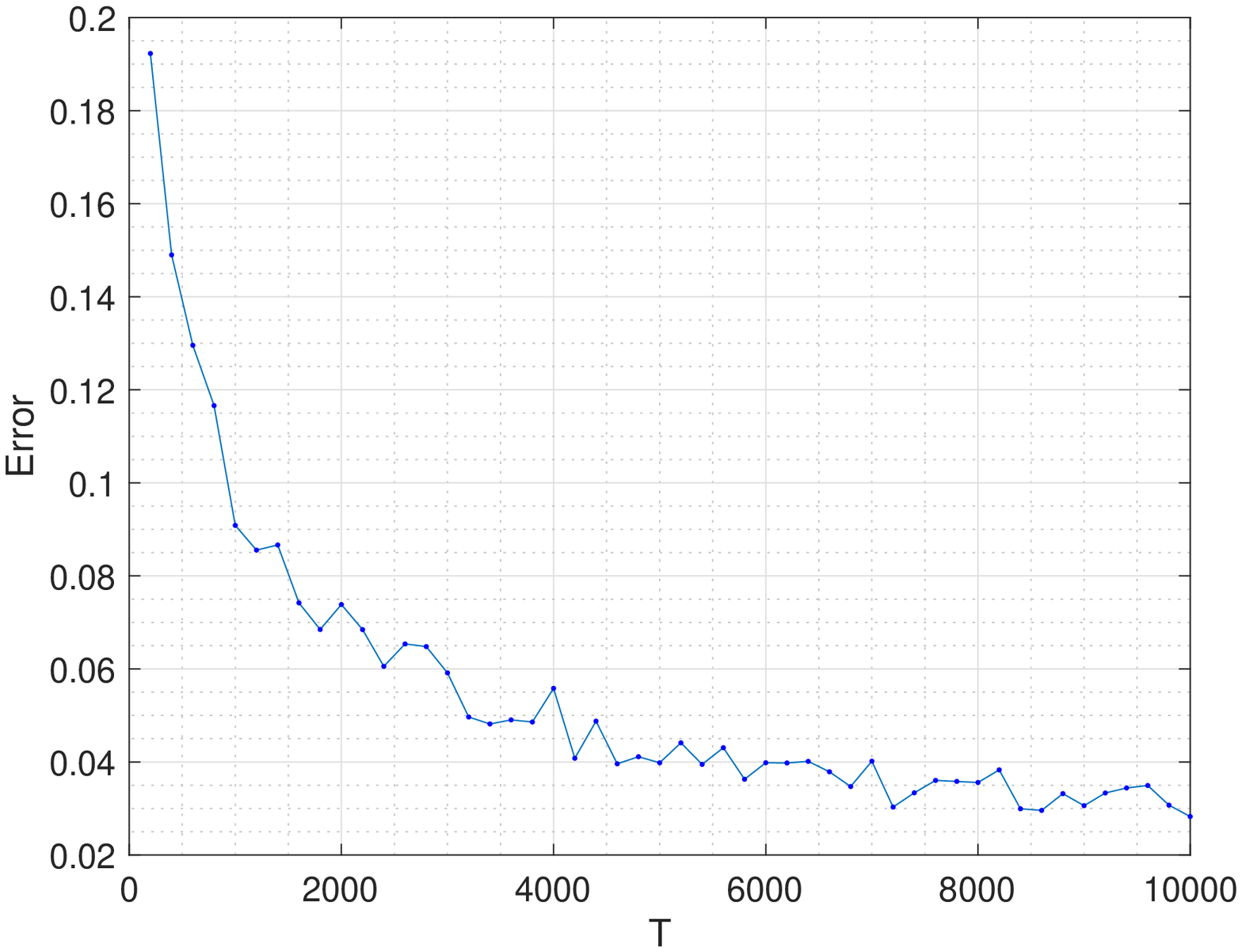}
    \label{fig:TMSV_T}
    }
    \caption{Estimating the average photon number $\bar n$ of the two mode squeezed state.  (a) The estimations and the errors of $\bar n(r)$ under $T=1000$ samples with $R=100$ repeats, where $r$ is the squeezing parameter and the exact value(red line) is given by $\bar n = \sinh^2 r$. (b) The evolution of estimation error corresponding to the number of samples $T$ with $R=100$ repeats. We fix $r=1$ here.}
    \label{fig:TMSV}
\end{figure}

The estimation result and the error are shown in Fig.~\ref{fig:TMSV_r}, which perfectly fits the theoretical result. The precision can be further enhanced by using more samples, see Fig.~\ref{fig:TMSV_T}. 

\section*{References}

\bibliographystyle{unsrt}
\bibliography{refs.bib}

\end{document}